   \newtheorem{theorem}{Theorem}
    \newtheorem{lemma}[theorem]{Lemma}
   \newtheorem{claim}[theorem]{Claim}
  \DeclareMathOperator{\polylog}{polylog}
\title{Maximum Matching on Trees in the Online Preemptive and the Incremental Dynamic Graph Models} \author{Sumedh Tirodkar\\
 \{sumedh.tirodkar@tifr.res.in\}
 \and Sundar Vishwanathan\\
 \{sundar@cse.iitb.ac.in\}}
 \date{}
\begin{document}

\maketitle

\begin{abstract}
We study the Maximum Cardinality Matching (MCM) and the Maximum Weight Matching (MWM) problems, on trees and on some special classes of graphs, in the Online Preemptive and the Incremental Dynamic Graph models. In the {\em Online Preemptive} model, the edges of a graph are revealed one by one and the algorithm is required to always maintain a valid matching. On seeing an edge, the algorithm has to either accept or reject the edge. If accepted, then the adjacent edges are discarded, and all rejections are permanent. In this model, the complexity of the problems is settled for deterministic algorithms~\cite{mcgregor,varadaraja}. Epstein et al.~\cite{epstein} gave a $5.356$-competitive randomized algorithm for MWM, and also proved a lower bound on the competitive ratio of $(1+\ln 2) \approx 1.693$ for MCM. The same lower bound applies for MWM. 

In the {\em Incremental Dynamic Graph} model, at each step an edge is added to the graph, and the algorithm is supposed to quickly update its current matching. Gupta~\cite{gupta} proved that for any $\epsilon\leq 1/2$, there exists an algorithm that maintains a $(1+\epsilon )$-approximate MCM for an incremental bipartite graph in an ``amortized'' update time of $O\left(\frac{\log^2  n}{\epsilon^4}\right)$. No $(2-\epsilon)$-approximation algorithm with a worst case update time of $O(1)$ is known in this model, even for special classes of graphs.

In this paper we show that some of the results can be improved for trees, and for some special classes of graphs. In the online preemptive model, we present a $64/33$-competitive (in expectation) randomized algorithm (which uses only two bits of randomness) for MCM on trees.

Inspired by the above mentioned algorithm for MCM, we present the main result of the paper, a randomized algorithm for MCM with a ``worst case'' update time of $O(1)$, in the incremental dynamic graph model, which is $3/2$-approximate (in expectation) on trees, and $1.8$-approximate (in expectation) on general graphs with maximum degree $3$. Note that this algorithm works only against an oblivious adversary. Hence, we derandomize this algorithm, and give a $(3/2+\epsilon)$-approximate deterministic algorithm for MCM on trees, with an amortized update time of $O(1/\epsilon)$.

We also present a minor result for MWM in the online preemptive model, a $3$-competitive (in expectation) randomized algorithm (that uses only $O(1)$ bits of randomness) on growing trees (where the input revealed upto any stage is always a tree, i.e. a new edge never connects two disconnected trees).
\end{abstract}

\section{Introduction}
The {\em Maximum (Cardinality/Weight) Matching}  problem is one of the most extensively studied problems in Combinatorial Optimization. See Schrijver's book~\cite{schrijver} and references therein for a comprehensive overview of classic work. A {\em matching} $M\subseteq E$ is a set of edges such that at most one edge is incident on any vertex. Traditionally the problem was studied in the offline setting where the entire input is available to the algorithm beforehand. But over the last few decades it has been extensively studied in various other models where the input is revealed in pieces, like the vertex arrival model (adversarial and random), the edge arrival model (adversarial and random), streaming and semi-streaming models, the online preemptive model, etc.~\cite{karp,epstein,epstein2,mcgregor,feigenbaum,pruhs}. In this paper, we study the Maximum Cardinality Matching (MCM) and the Maximum Weight Matching (MWM) problems, on trees and on some special classes of graphs, in the {\em Online Preemptive} model, and in the {\em Incremental Dynamic Graph} model. (Refer Section~\ref{comp} for a comparison between the two models.)

In the online preemptive model, the edges arrive in an online manner, and the algorithm is supposed to accept or reject an edge on arrival. If accepted, the algorithm can reject it later, and all rejections are permanent. The algorithm is supposed to always maintain a valid matching. There is a $5.828$-competitive deterministic algorithm due to McGregor~\cite{mcgregor} for MWM, and a tight lower bound for deterministic algorithms due to Varadaraja~\cite{varadaraja}. Epstein et al.~\cite{epstein} gave a $5.356$-competitive randomized algorithm for MWM, and also proved a $1.693$ lower bound on the competitive ratio achievable by any randomized algorithm for MCM. No better lower bound is known for MWM.


In~\cite{sumedh}, the authors gave the first randomized algorithm with competitive ratio ($28/15$ in expectation) less than $2$ for MCM in the online preemptive model, on growing trees (defined in Section~\ref{prelim}). In Section~\ref{rand_tree}, we extend their algorithm to give a $64/33$-competitive (in expectation) randomized (which uses only two bits of randomness) algorithm  for MCM on trees. Although the algorithm is an extension of the one for growing trees in~\cite{sumedh}, it motivates the algorithm (described in Section~\ref{det_tree}) for MCM in the incremental dynamic graph model.

Note that the adversary presenting the edges in the online preemptive model is oblivious, and does not have access to the random choices made by the algorithm.

In recent years, algorithms for approximate MCM in dynamic graphs have been the focus of many studies due to their wide range of applications. Here~\cite{bernstein,sayan,peng,solomon} is a non-exhaustive list some of the studies. The objective of these dynamic graph algorithms is to efficiently process an online sequence of update operations, such as edge insertions and deletions. It has to quickly maintain an approximate maximum matching despite an adversarial order of edge insertions and deletions.  Dynamic graph problems are usually classified according to the types of updates allowed: incremental models allow only insertions, decremental models allow only deletions, and fully dynamic models allow both. We study MCM in the incremental model. Gupta~\cite{gupta} proved that for any $\epsilon\leq 1/2$, there exists an algorithm that maintains a $(1+\epsilon )$-approximate MCM on bipartite graphs in the incremental model in an ``amortized'' update time of $O\left(\frac{\log^2  n}{\epsilon^4}\right)$. We present a randomized algorithm for MCM in the incremental model with a ``worst case'' update time of $O(1)$, which is $3/2$-approximate (in expectation) on trees, and $1.8$-approximate (in expectation) on general graphs with maximum degree $3$. This algorithm works only against an oblivious adversary. Hence, we derandomize this algorithm, and give a $(3/2+\epsilon)$-approximate deterministic algorithm for MCM on trees, with an amortized update time of $O(1/\epsilon)$. Note that the algorithm of Gupta~\cite{gupta} is based on multiplicative weights update, and it therefore seems unlikely that a better running time analysis for special classes of graphs is possible.

We present a minor result in Section~\ref{algo}, a $3$-competitive (in expectation) randomized  algorithm (which uses only $O(1)$ bits of randomness) for MWM on growing trees in the online preemptive model.  Although, growing trees is a very restricted class of graphs, there are a couple of reasons to study the performance of the algorithm on this class of input. Firstly, almost all lower bounds, including the one due to Varadaraja~\cite{varadaraja} for MWM are on growing trees. Secondly, even for this restricted class, the analysis is involved. We use the primal-dual technique for analyzing the performance of this algorithm, and show that this analysis is indeed tight by giving an example, for which the algorithm achieves the competitive ratio $3$.  We describe the algorithm for general graphs, but are only able to analyze it for growing trees, and new ideas are needed to prove a better bound for general graphs. 
\subsection{Preliminaries}\label{prelim}
We use primal-dual techniques to analyze the performance of all the randomized algorithms described in this paper. Here are the well known Primal and Dual formulations of the matching problem.
 \begin{center}
 \begin{tabular}{c|c}
 Primal LP & Dual LP \\\hline
 $\max \sum_e w_ex_e$ & $\min \sum_v y_v$\\
 $\forall v:\sum_{v\in e}x_e \leq 1$ & $\forall e: y_u + y_v \geq w_e$\\
 $x_e\geq 0$ & $y_v \geq 0$
 \end{tabular}
\end{center}
For MCM, $w_e=1$ for any edge. Any matching $M$ implicitly defines a feasible primal solution. If an edge $e\in M$, then $x_e=1$, otherwise $x_e=0$.

Suppose an algorithm outputs a matching $M$, then let $P$ be the corresponding primal feasible solution. Let $D$ denote some feasible dual solution. The following claim can be easily proved using weak duality.
\begin{claim}\label{ar_lemma}
If $D \leq \alpha\cdot P$, then the algorithm is $\alpha$-competitive.
\end{claim}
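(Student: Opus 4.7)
The plan is to invoke LP weak duality directly. Recall that the LP written above is a relaxation of the integer program for MWM (and MCM when $w_e=1$): its primal variables $x_e$ correspond to edges, its dual variables $y_v$ correspond to vertex potentials, and the feasibility constraints are the usual degree constraint on the primal side and $y_u+y_v\ge w_e$ on the dual side. By weak duality, every feasible primal value is at most every feasible dual value.

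First, I would fix an optimum matching $M^\ast$ and let $w(M^\ast)$ denote its weight. Setting $x_e=1$ for $e\in M^\ast$ and $x_e=0$ otherwise gives a feasible primal solution of value exactly $w(M^\ast)$. Since $D$ is dual feasible by hypothesis, weak duality yields $w(M^\ast)\le D$.

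Next, I would chain this with the assumed inequality $D\le \alpha\cdot P$ to obtain $w(M^\ast)\le \alpha\cdot P$. Now $P=\sum_e w_e x_e$, where $x$ is the primal solution induced by the matching $M$ that the algorithm returns, so $P=w(M)$. Therefore $w(M)\ge w(M^\ast)/\alpha$, which is exactly the definition of $\alpha$-competitiveness. For the randomized setting, $P$ and $D$ are random variables; since $w(M^\ast)$ is deterministic, taking expectations preserves the chain $w(M^\ast)\le \mathbb{E}[D]\le \alpha\cdot\mathbb{E}[P]=\alpha\cdot\mathbb{E}[w(M)]$, giving $\alpha$-competitiveness in expectation. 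One should ensure that the dual solution $D$ produced alongside the algorithm is feasible with probability one (or at least almost surely), so that weak duality applies sample-by-sample before taking expectations.

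There is no real obstacle in this claim itself; it is a one-line consequence of weak duality. The genuine work in the rest of the paper will be in (i) describing, for each randomized algorithm, a method to grow a dual assignment $\{y_v\}$ as edges arrive, (ii) verifying that this assignment satisfies $y_u+y_v\ge w_e$ for every edge, and (iii) bounding $\mathbb{E}[D]$ against $\alpha\cdot\mathbb{E}[P]$ by a charging/amortized argument tailored to the structural class (trees, growing trees, or max-degree-$3$ graphs). Claim~\ref{ar_lemma} is simply the bridge that converts such a dual-fitting analysis into a competitive ratio bound.
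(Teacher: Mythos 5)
Your proof is correct and takes exactly the approach the paper intends: the paper simply remarks that the claim "can be easily proved using weak duality" without writing out the argument, and your chain $w(M^\ast)\le D\le \alpha\cdot P = \alpha\cdot w(M)$ (with expectations taken in the randomized case) is that argument spelled out. The only hypothesis worth emphasizing, which you correctly flag, is that $D$ must be dual-feasible (as the paper assumes when it writes ``Let $D$ denote some feasible dual solution''), since weak duality is applied sample-by-sample before taking expectations.
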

If $M$ is any matching, then for an edge $e$, $X(M,e)$ denotes edges in $M$ which share a vertex with the edge $e$. We will say that a vertex(/an edge) is {\em covered } by a matching $M$ if there is an edge in $M$ which is incident on(/adjacent to) the vertex(/edge). We also say that an edge is {\em covered } by a matching $M$ if it belongs to $M$.

In the online preemptive model, growing trees are trees, such that a new edge has exactly one vertex common with already revealed edges.

\subsection{Online Preemptive Model vs. Incremental Dynamic Graph Model}\label{comp}
There are two main differences between these models. Firstly, in the online preemptive model, once an edge is rejected/removed from the matching maintained by the algorithm, it cannot be added into its matching, whereas in the incremental dynamic graph model, rejected/removed edges can be added to the matching later on. Secondly, there is no restriction on how much time an algorithm in the online preemptive model can use to process a revealed edge, whereas in the incremental dynamic graph model, the algorithm is supposed to process the revealed edge fast. The term ``fast'' is used loosely, and is specific to any problem. For example, MCM on general graphs can be found in time $O(m\sqrt{n})$ when the entire input is available~\cite{micali}. But for dynamic graphs, every time an edge is inserted, the algorithm is expected to maintain a matching, approximate if not exact, in time lower than the time required by the optimal offline algorithm for  MCM (say, for instance, in $O(\polylog n)$ amortized time).

\section{MCM in the Online Preemptive Model}\label{rand_tree}
In this section, we present a randomized algorithm (that uses only $2$ bits of randomness) for MCM on trees in the online preemptive model.

The algorithm maintains four matchings $M_1,M_2,M_3,M_4$, and it tries to ensure that a large number of input edges are covered by some or other matchings. (Here, the term ``large number'' is used vaguely. Suppose more than four edges are incident on a vertex, then at most four of them will belong to matchings, one to each.) One of the four matchings is output uniformly at random. A more formal description of the algorithm follows.
   \begin{algorithm}[H]
  \caption{Randomized Algorithm for MCM on Trees}\label{alg_rand1}
   \begin{enumerate}
    \item Pick $l\in_{u.a.r.}\{1,2,3,4\}$.
    \item The algorithm maintains four matchings: $M_1,M_2,M_3,$ and $M_4$.
    \item On arrival of an edge $e$, the processing happens in two phases.
   \begin{enumerate}
	   \item {\bf The Augment phase.} The new edge $e$
		   is added to each $M_i$ in which there are no 
		   edges  adjacent to $e$.
	   \item {\bf The Switching phase.}
		   For $i=2,3,4$, in order, $M_i\gets M_i\setminus X(M_i,e)\cup \{e\}$, provided it decreases the quantity 
 		   $\sum_{j\in[4],i\neq j, |X(M_i\cap M_j,e)| = |X(M_i,e)|}|M_i\cap M_j|$.
   \end{enumerate}
   \item Output  $M_l$.
   \end{enumerate}
 \end{algorithm}
 Although, $l$ is picked randomly at the beginning of the algorithm, this value is not known to the adversary.
 
 Note that in the switching phase, the expected size of the matching stored by the algorithm might decrease. For example, consider two disjoint edges $e_1$ and $e_2$ that have been revealed. Each of them will belong to all four matchings. So the expected size of the matching stored by the algorithm is $2$. Now, if an edge $e$ is revealed between $e_1$ and $e_2$, then $e$ will be added to $M_2$ and $M_3$. The expected size of the matching is now $1.5$. The important thing to notice here is that the decrease is not too much, and we are able to prove that the competitive ratio of the algorithm still remains below $2$.
 
  We begin with the  following observations.
\begin{itemize}
 \item After an edge is revealed, its end points are covered by all four matchings. 
 \item An edge $e$ that does not belong to any matching has four edges incident on its end points such that each of these edges
	 belongs to a distinct matching. This  holds when the edge is revealed, and does not change subsequently.
 \item Every edge is covered by at least three matchings.
\end{itemize}
An edge is called {\em internal} if there are edges incident on both its end points which belong to some matching. An edge is called a {\em leaf edge} either if one of its end point is a leaf or if all the edges incident on one of its end points do not belong to any matching. An edge is called {\em bad} if its end points are covered by only three matchings. 

We begin by proving some properties about the algorithm. The key structural lemma that keeps ``influences'' of bad edges local is given below.  
\begin{lemma}\label{internal}
 At most five consecutive vertices on a path can have bad edges incident on them.
\end{lemma}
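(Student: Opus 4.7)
The plan is to argue by contradiction: assume there is a subpath $v_0 v_1 v_2 v_3 v_4 v_5$ of six consecutive vertices, each incident on at least one bad edge, and derive a structural contradiction with how the algorithm runs. First I would unpack the definitions. By the third observation, every edge is covered by at least three of $M_1,M_2,M_3,M_4$; so an edge $e=xy$ is bad exactly when there is a unique matching $M_{\pi(e)}$ none of whose edges touches $x$ or $y$. By the first observation, $e$ is covered by all four matchings at the instant it is revealed, so the missing index $\pi(e)$ must have been lost at $e$'s endpoints during a switching phase executed when some later edge arrived. For each bad edge $b_i$ incident on $v_i$ I would record $\pi(b_i)$ and the arrival event that caused the deficit.

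Next I would extract the local structure around the six vertices. For each $v_i$ I would tabulate the matchings $M_j$ having an edge incident on $v_i$, separating path-edges from off-path edges, and using the internal/leaf distinction defined just before the lemma. Tracing each deficit back to the arrival that produced it, I would show that two deficits at nearby vertices produced by the same arriving edge collapse several vertices of the configuration simultaneously, so the six deficits come from only a small number of arrival events whose joint effect along the path can be described cleanly.

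The core step is to invoke the switching criterion on one such arrival event. The rule for $i \in \{2,3,4\}$ fires exactly when the replacement $M_i \gets M_i \setminus X(M_i,e) \cup \{e\}$ strictly decreases
\[
\sum_{\substack{j\in[4],\ j\neq i \\ |X(M_i\cap M_j,e)| = |X(M_i,e)|}} |M_i\cap M_j|.
\]
Pigeonholing the six missing indices $\pi(b_0),\ldots,\pi(b_5)$ into $\{1,2,3,4\}$ forces at least one index to be missing at two vertices whose bad edges are linked by a short chain of path-edges. I would then argue that at the arrival responsible for this doubled deficit the switching predicate must have been violated: either a switch that strictly decreased the above sum was skipped, or a switch that was executed failed the strict-decrease condition. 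Either conclusion contradicts the algorithm's specification, yielding the claimed bound.

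The main obstacle will be the bookkeeping inside the switching criterion. The sum discards overlaps $|M_i\cap M_j|$ for which the intersection sizes around $e$ do not match $|X(M_i,e)|$, so one cannot merely count how many matchings overlap; one must track, at the neighbourhood of each arrival, how many $M_i$-edges adjacent to the incoming edge also lie in $M_j$. Handling the boundary cases where $v_0$ or $v_5$ is a leaf, or where the bad edge at $v_i$ is itself a leaf edge, will also need care, and the precise threshold five (rather than four or six) will emerge from counting how many distinct deficit indices from $\{1,2,3,4\}$ can be strung along a path before the switching rule forbids the pattern.
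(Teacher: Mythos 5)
Your high-level plan (argue by contradiction over six consecutive vertices, trace each deficit back to the switching event that caused it, and then derive a violation of the switching predicate) is plausible as a starting intuition, but the crucial step is missing and the pigeonhole argument you propose does not actually close the gap. Two bad edges missing the same matching index $\pi$ at nearby vertices is not, by itself, a contradiction: the switching predicate is a purely local check performed at each arrival, and it is entirely consistent for two different arrival events to each legitimately leave a deficit in the same $M_j$ at two different locations. You acknowledge that the threshold five ``will emerge from counting,'' but that counting is precisely the content of the lemma, and the proposal never supplies it.

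The idea you are missing is that $M_4$ is \emph{not} interchangeable with $M_1,M_2,M_3$: in the switching phase the loop runs $i=2,3,4$ in order, so $M_4$ is processed last, and an edge that only enters $M_4$ during switching necessarily already has neighbours in $M_1$, $M_2$ and $M_3$, hence cannot be bad. The paper isolates this asymmetry in a separate lemma (call it the $M_4$-separation lemma): (a) an edge that lies \emph{only} in $M_4$ at the end of the input cannot have bad edges incident on both of its endpoints, and (b) an edge added to $M_4$ only in the switching phase cannot itself be bad. Lemma~\ref{internal} is then proved by chasing, for any putative bad internal edge not in $M_4$, the $M_4$-neighbour it must once have had, and observing that whatever later switched that neighbour out of $M_4$ is itself non-bad by part (b), while if the neighbour stays in $M_4$ part (a) blocks a bad edge on the far side. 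It is this $M_4$-only ``separator'' structure, not a pigeonhole on the missing index, that caps the run of consecutive bad-incident vertices at five. Without identifying $M_4$'s special role and proving an analogue of the separation lemma, the argument cannot be completed along the lines you sketch.
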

According to Lemma~\ref{internal}, there can be at most four consecutive internal bad edges or at most five bad leaf edges incident on five consecutive vertices of a path. Lemma~\ref{internal} is proved in Appendix~\ref{pf_inbad}.

Once all edges have been seen, we distribute the  primal charge among the dual variables, and use the primal-dual framework to infer the competitive ratio. If the end points of every edge are covered with four matchings, then the distribution of dual charge is easy. However we do have bad edges, and would like the edges in matchings to contribute more to the end-points of these edges. Then, the charge on the other end-point would be less and we need to balance this through other edges. Details follow.
\begin{lemma}\label{tree1}
There exists an assignment of the primal charge to the dual variables such that the dual constraint for each edge $e\equiv (u,v)$ is satisfied at least $\frac{33}{64}$ in expectation, i.e. $\mathbb{E}[y_u+y_v]\geq \frac{33}{64}$.
 \end{lemma}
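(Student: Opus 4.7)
The plan is to apply the primal-dual framework of Claim~\ref{ar_lemma}. I would define a random dual assignment $y$ conditional on the algorithm's random index $l\in\{1,2,3,4\}$: for each matching edge $f=(a,b)\in M_l$, split one unit of charge between its endpoints, giving $\tfrac12$ to each as the default. This ensures $\sum_v y_v=|M_l|$, hence $\sum_v\mathbb{E}[y_v]=\mathbb{E}[|M_l|]$ equals the expected primal; once $\mathbb{E}[y_u+y_v]\ge\tfrac{33}{64}$ is established for every edge, the competitive ratio $\tfrac{64}{33}$ follows from weak duality and Claim~\ref{ar_lemma}.

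Under the default split, writing $y_u^{(i)}+y_v^{(i)}$ for the contribution of $M_i$, I have
$$\mathbb{E}[y_u+y_v] \;=\; \tfrac{1}{4}\sum_{i=1}^{4}\bigl(y_u^{(i)}+y_v^{(i)}\bigr),$$
where each summand equals $1$ when either $e\in M_i$ or $M_i$ carries distinct edges at both $u$ and $v$, equals $\tfrac12$ when $M_i$ covers exactly one of $u,v$, and $0$ otherwise. If at least two of the four summands attain value $1$, the bound $\mathbb{E}[y_u+y_v]\ge\tfrac58\ge\tfrac{33}{64}$ follows immediately. The tight cases are therefore bad edges with at most one ``two-sided'' covering matching (where the default can drop to $\tfrac38=\tfrac{24}{64}$) and non-bad edges whose four summands include at most one $1$ (where the default is $\tfrac{32}{64}$, a hair below the target).

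For these cases I would perturb the split. For each matching edge $f=(a,b)\in M_i$ whose endpoint $a$ lies on a bad edge $e$, transfer a fraction $\delta$ of $f$'s charge from $y_b^{(i)}$ to $y_a^{(i)}$. This preserves $\sum_v y_v^{(i)}$ and the local dual constraint along $f$ itself (since $y_a^{(i)}+y_b^{(i)}$ is unchanged), but boosts $\mathbb{E}[y_u+y_v]$ along the bad edge by exactly $\delta/4$. A calibration such as $\delta=\tfrac{1}{16}$ yields a $\tfrac{1}{64}$ boost per transfer, precisely enough for the ``$\tfrac{32}{64}$'' non-bad cases with one transfer; larger coordinated transfers (distributed over the three matchings covering a bad edge) handle the worst $\tfrac{9}{64}$ deficit.

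The main obstacle is the global consistency of these transfers: each donor endpoint $b$ loses dual weight, which could drop other edges $(b,c)$ below $\tfrac{33}{64}$. Lemma~\ref{internal} is the essential structural tool here---since bad edges occupy at most five consecutive vertices on any path, each matching edge borders only a bounded number of bad edges, and the surrounding stretch of non-bad edges carries enough slack to absorb the transfers. The bulk of the argument is a case analysis organised by the length and type of each bad-edge run (internal runs of at most four, leaf runs of at most five) and by the cover patterns of its three covering matchings, arranging for coordinated donations whose total at any donor endpoint stays within the slack available from its incident edges. The constant $\tfrac{33}{64}$ emerges as the tight value at which this balancing is just feasible in the worst-case configuration.
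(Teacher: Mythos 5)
Your proposal takes a genuinely different route from the paper. The paper roots the tree and sends nearly all of each matching edge's primal weight to the \emph{parent} endpoint, with small corrections of $0,\epsilon,2\epsilon,\dots,5\epsilon$ flowing toward the child; the multiplier on $\epsilon$ encodes position along a run of bad edges, and Lemma~\ref{internal} caps the multiplier at $5$. The balancing $\epsilon=\tfrac1{16}$ comes from equating the no-matching case ($3-15\epsilon$) against the bad-edge cases ($2+\epsilon$). You instead begin from a symmetric $\tfrac12$--$\tfrac12$ default and try to perturb it locally.

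As written, the plan has concrete gaps. First, your transfer rule fires only when an endpoint lies on a \emph{bad} edge, yet your own ``$\tfrac{32}{64}$'' tight case is a \emph{non-bad} edge $e$ (in no matching, each covering matching hitting exactly one of $u,v$); that case receives no boost under your rule. Second, the ``$\tfrac{24}{64}$'' case needs a boost of $\tfrac{9}{64}$ in expectation, i.e.\ $\tfrac{9}{16}$ in the unaveraged sum, so with three transfers you need $\delta=\tfrac{3}{16}$ per transfer; whether the donor endpoints retain enough slack is exactly the hard part, and ``the surrounding stretch of non-bad edges carries enough slack'' is asserted, not proved. Third, if a matching edge $f=(a,b)\in M_i$ has \emph{both} endpoints adjacent to bad edges, your rule prescribes transfers along $f$ in both directions, which cancel; the paper faces a related problem and resolves it explicitly by re-classifying internal edges flanked by bad leaf edges. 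Finally, the reason Lemma~\ref{internal} does the work in the paper is that the directed (child-to-parent) charge makes the $\epsilon$-corrections \emph{cascade} monotonically along a path of bad edges, so the bound of five consecutive vertices translates directly into a $5\epsilon$ cap; your symmetric default lacks this one-directional cascade, so it is not clear that the same structural lemma controls the total transfer at any donor. You have correctly identified the role of Lemma~\ref{internal}, the primal-dual framing, and the tight configurations, but the case analysis that would make the perturbation scheme consistent is the substance of the proof and is missing.
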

 
\begin{proof}
  Root the tree at an arbitrary vertex. For any edge $e\equiv (u,v)$, let $v$ be the parent vertex, and $u$ be the child vertex. The dual variable assignment is done after the entire input is seen, as follows.\\ 
\textbf{Dual Variable Management:} An edge $e$ will distribute its primal weight between its end-points. The exact values are discussed below. In general, we look to transfer all of the primal charge to the parent vertex. But this does not work and we need a finer strategy. This is detailed below.
\begin{itemize}
 \item If $e$ does not belong to any matching, then it does not contribute to the value of dual variables.
 \item If $e$ belongs to a single matching then, depending on the situation, one of $0$, $\epsilon$, $2\epsilon$, $3\epsilon$, $4\epsilon$, or $5\epsilon$ of its primal charge will be assigned to $u$ and the rest will be assigned to $v$.
 \item If $e$ belongs to two matchings, then at most $6\epsilon$ of its primal charge will be assigned to $u$ as required. The
	 rest is assigned to $v$.
 \item If $e$ belongs to three or four matchings, then its entire primal charge is assigned to $v$.
 \end{itemize}

We will show that $y_u+y_v\geq 2+\epsilon$ for such an edge, when summed over all four matchings. The value of $\epsilon$ is chosen later.  
 
  For the sake of analysis, if there are bad leaf edges incident on both the end points of an internal edge, then we analyze it as a bad internal edge. We need to do this because a bad leaf edge might need to transfer its entire primal charge to the vertex on which there are edges which do not belong to any matching. Note that the end points of the internal edge would still be covered by three matchings, even if we consider that the bad leaf edges do not exist on its end points. The analysis breaks up into eight cases.
  
   \textbf{Case 1.} Suppose $e$ does not belong to any matching. There must be a total of at least $4$ edges incident on $u$ and $v$ besides $e$, each belonging to a distinct matching. Of these $4$, at least a total of $3$, say $e_1$, $e_2$, and $e_3$, must be between some children of $u$ and $v$, to $u$ and $v$ respectively. The edges $e_1$, $e_2$, and $e_3$, each assign a charge of  at least $1-5\epsilon$ to $y_u$ and $y_v$, respectively. Therefore, $y_u+y_v\geq 3-15\epsilon \geq 2+\epsilon$.

   \textbf{Case 2.} Suppose $e$ is a bad leaf edge that belongs to a single matching, and internal edges are incident on $v$. This implies that there is an edge $e_1$ from a child vertex of $v$ to $v$, which belongs to single matching, and another edge $e_2$, also belonging to single matching from $v$ to its parent vertex. The edge $e$ assigns a charge of $1$ to $y_v$. If $e_1$ assigns a charge of $1$ or $1-\epsilon$ or $1-2\epsilon$ or $1-3\epsilon$  or $1-4\epsilon$ to $y_v$, then $e_2$ assigns $\epsilon$ or $2\epsilon$ or $3\epsilon$ or $4\epsilon$ or $5\epsilon$ respectively to $y_v$.  In either case, $y_u+y_v=2+\epsilon$. The key fact is  that $e_1$ could not have assigned $5\epsilon$ to its child vertex. Since, then, by Lemma~\ref{internal}, $e$ cannot be a bad edge.

   \textbf{Case 3.} Suppose $e$ is a bad leaf edge that belongs to a single matching, and internal edges are incident on $u$. This implies that there are two edges $e_1$ and $e_2$ from children of $u$ to $u$, each belonging to a single distinct matching. The edge $e$ assigns a charge of $1$ to $y_v$. Both $e_1$ and $e_2$ assign a charge of at least $1-4\epsilon$ to $y_u$. In either case, $y_u+y_v\geq 3-8\epsilon \geq 2+\epsilon$. The key fact is that neither $e_1$ nor $e_2$ could have assigned more than $4\epsilon$ to their corresponding child vertices. Since, then, by Lemma~\ref{internal}, $e$ cannot be a bad edge.

   \textbf{Case 4.} Suppose $e$ is an internal bad edge. This implies (by Lemma~\ref{internal}) that there is an edge  $e_1$ from a child vertex of $u$ to $u$, which belongs to a single  matching. Also, there is an edge $e_2$, from $v$ to its parent vertex (or from a child vertex $v$ to $v$), which also belongs to a single matching. The edge $e$ assigns its remaining charge ($1$ or $1-\epsilon$ or $1-2\epsilon$ or $1-3\epsilon$  or $1-4\epsilon$) to $y_v$. If $e_1$ assigns a charge of $1$ or $1-\epsilon$ or $1-2\epsilon$ or $1-3\epsilon$  or $1-4\epsilon$ to $y_u$, then $e_2$ assigns $\epsilon$ or $2\epsilon$ or $3\epsilon$ or $4\epsilon$ or $5\epsilon$ respectively to $y_v$.  In either case, $y_u+y_v=2+\epsilon$. The key fact is  that $e_1$ could not have assigned $5\epsilon$ to its child vertex. Since, then, by Lemma~\ref{internal}, $e$ cannot be a bad edge.
  
  \textbf{Case 5.} Suppose $e$ is not a bad edge, and it belongs to a single matching. Then either there are at least two edges $e_1$ and $e_2$ from child vertices of $u$ or $v$ to $u$ or $v$ respectively, or $e_1$ on $u$ and $e_2$ on $v$, each belonging to a single matching, or one edge $e_3$ from a child vertex of $u$ or $v$ to $u$ or $v$, respectively, which belongs to two matchings, or one edge $e_4$ from a child vertex of $u$ or $v$ to $u$ or $v$, respectively, which belongs to single matching, and one edge $e_5$  from $v$ to its parent vertex which belongs to two matchings. In either case, $y_u+y_v\geq 3-10\epsilon\geq 2+\epsilon$.
    
  \textbf{Case 6.} Suppose $e$ is a bad edge that belongs to two matchings, and internal edge is incident on $u$ or $v$.
	  This implies that there is an edge $e_1$, from a child vertex of $u$ to $u$ or from $v$ to its parent vertex which belongs to a single matching. The edge $e$ assigns a charge of $2$ to $y_v$, and the edge $e_1$  assigns a charge of $\epsilon$ to $y_u$ or $y_v$ respectively. Thus, $y_u+y_v=2+\epsilon$. 
  
  \textbf{Case 7.} Suppose $e$ is not a bad edge and it belongs to two matchings. This means that  either there is an edge $e_1$ from a child vertex of $u$ to $u$, which belongs to at least one matching,  or there is an edge from child vertex of $v$ to $v$ that belongs to at least one matching, or there is an edge from $v$ to its parent vertex which belongs to two matchings. The edge $e$ assigns a charge of  $2$ among $y_u$ and $y_v$. The neighboring edges assign a charge of $\epsilon$ to $y_u$ or $y_v$ (depending on which vertex it is incident to), yielding  $y_u+y_v\geq 2+\epsilon$.
	  
  \textbf{Case 8.} Suppose, $e$ belongs to $3$ or $4$ matchings, then trivially $y_u+y_v\geq 2+\epsilon$.
  
  From the above cases, $y_v+y_v\geq 3-15\epsilon$ and $y_u+y_v\geq 2+\epsilon$. The best value for the competitive ratio is obtained when $\epsilon=\frac{1}{16}$, yielding $\mathbb{E}[y_u+y_v]\geq \frac{33}{64}$.
\end{proof}
Lemma~\ref{tree1} immediately implies Theorem~\ref{ub_MCM} using Claim~\ref{ar_lemma}.
\begin{theorem}\label{ub_MCM}
 Algorithm~\ref{alg_rand1} is a $\frac{64}{33}$-competitive randomized algorithm for finding MCM on trees.
\end{theorem}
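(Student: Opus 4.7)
The proof is essentially a direct invocation of Claim~\ref{ar_lemma} combined with Lemma~\ref{tree1}, so the plan is mostly bookkeeping rather than new ideas. First I would set up the primal and dual quantities. Since the algorithm outputs $M_l$ with $l$ drawn uniformly from $\{1,2,3,4\}$, the expected size of the output matching is $\text{ALG} = \tfrac{1}{4}\sum_{i=1}^{4}|M_i|$. The dual assignment described in the proof of Lemma~\ref{tree1} distributes each edge's per-matching contribution to the two endpoints, so if we let $y_v$ denote the total dual charge at $v$ summed over all four matchings, then $\sum_v y_v = \sum_{i=1}^4 |M_i| = 4\,\text{ALG}$.

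Next I would apply Lemma~\ref{tree1}, which asserts that for every edge $e=(u,v)$ the quantity $y_u+y_v$ is at least $2+\epsilon$ when summed over the four matchings, or equivalently $\mathbb{E}[y_u+y_v]\ge 33/64$ under the random choice of $l$ with $\epsilon=1/16$. Rescaling, define $\tilde y_v = y_v/(2+\epsilon)$; then $\tilde y$ satisfies $\tilde y_u+\tilde y_v\ge 1$ for every edge and is therefore a feasible dual solution to the matching LP. By weak duality applied to any optimal matching $M^*$,
\[
|M^*|\;\le\;\sum_{v} \tilde y_v\;=\;\frac{4\,\text{ALG}}{2+\epsilon}\;=\;\frac{64}{33}\,\text{ALG}.
\]
This is exactly the hypothesis of Claim~\ref{ar_lemma} with $\alpha = 64/33$, so Algorithm~\ref{alg_rand1} is $\tfrac{64}{33}$-competitive in expectation, establishing Theorem~\ref{ub_MCM}.

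Given that Lemma~\ref{tree1} already does all the heavy lifting, there is essentially no obstacle in this concluding step; the only point to be careful about is the factor $4$ arising from the uniform choice of $l$, which is precisely what turns the per-edge bound $2+\epsilon$ into the competitive ratio $4/(2+\epsilon)=64/33$. Had I been attacking the statement without Lemma~\ref{tree1} in hand, the real difficulty would lie upstream: proving the structural Lemma~\ref{internal} that keeps bad edges from clustering along a path, and then using it to verify, case-by-case on how many matchings contain $e$ and its neighbors, that the $\epsilon$-slack transferred from adjacent matched edges always suffices to cover the deficit at a bad edge.
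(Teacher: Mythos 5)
Your proposal is correct and is essentially the paper's argument made explicit: the paper simply states that Lemma~\ref{tree1} together with Claim~\ref{ar_lemma} immediately yields the theorem, and you have filled in exactly the intended bookkeeping (summing the dual over the four matchings, rescaling by $2+\epsilon=33/16$ to get feasibility, and dividing by $4$ for the uniform choice of $l$). No gaps.
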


\section{MCM in the Incremental Dynamic Graph Model}\label{det_tree}
In this section, we present our main result, a randomized algorithm (that uses only $O(1)$ bits of randomness) for MCM in the incremental dynamic graph model, which is $3/2$-approximate (in expectation) on trees, and is $1.8$-approximate (in expectation) on general graphs with maximum degree $3$, with $O(1)$ worst case update time per edge. It is inspired by the randomized algorithm for MCM on trees described in Section~\ref{rand_tree}. In the online preemptive model, we cannot add edges in the matching which were discarded earlier, which results in the existence of bad edges. But in the incremental dynamic graph model, there is no such restriction. For some $i\in [3]$, let $e\equiv (u,v) \in M_i$ be switched out by some edge $e'\equiv (u,u')$, i.e. $M_i\gets  M_i\setminus \{e\}\cup\{e'\}$. If there is an edge $e''\equiv (v,v')\in M_j$ for $i\neq j$, then we can add $e''$ to $M_i$ if possible. Using this simple trick, we get a better approximation ratio in this model, and also, the analysis  becomes significantly simpler. Details follow.

   \begin{algorithm}[H]
   \caption{Randomized Algorithm for MCM}\label{alg_rand2}
   \begin{enumerate}
    \item Pick $l \in_{u.a.r.} \{1,2,3\}$.
    \item The algorithm maintains three matchings: $M_1,M_2,$ and $M_3$.
    \item When an edge $e$ is inserted, the processing happens in two phases.
   \begin{enumerate}
	   \item {\bf The Augment phase.} The new edge $e$
		   is added to each $M_i$ in which there are no 
		   edges  adjacent to $e$.
	   \item {\bf The Switching phase.}
		   For $i=2,3$, in order, $M_i\gets M_i\setminus X(M_i,e)\cup \{e\}$, provided it decreases the quantity 
 		   $\sum_{j\in[3],i\neq j, |X(M_i\cap M_j,e)| = |X(M_i,e)|}|M_i\cap M_j|$.\\
		   For every edge $e'$ discarded from $M_i$, add edges on the other end point of $e'$ in $M_j$ ($\forall j\neq i$) to $M_i$ if possible.
   \end{enumerate}
   \item Output the matching $M_l$ on query.
   \end{enumerate}
 \end{algorithm}
 
 Note that the end points of every edge will be covered by all three matchings, and hence all three matchings are maximal. 
 
We again use the primal-dual technique to analyze the performance of this algorithm on trees. 
\begin{lemma}\label{tree2}
There exists an assignment of the primal charge amongst the dual variables such that the dual constraint for each edge $e\equiv (u,v)$ is satisfied at least $\frac{2}{3}$rd in expectation.
 \end{lemma}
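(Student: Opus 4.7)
The plan is to adapt the primal-dual argument of Lemma~\ref{tree1} to the three-matching setting. Since $w_e=1$, each matching edge carries one unit of primal weight that we split between its two endpoints; the aim is to show $\sum_{i=1}^{3}(y^{(i)}_u+y^{(i)}_v)\ge 2$ for every revealed edge $e=(u,v)$, which gives the required $\mathbb{E}[y_u+y_v]\ge 2/3$ and, via Claim~\ref{ar_lemma}, the $3/2$-approximation.

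The key structural fact is that every $M_i$ stays maximal throughout the sequence of insertions. This is enforced by the Augment step on new edges and by the add-back substep of the Switching phase after any swap, which reinserts a neighbouring edge from another matching at the freed endpoint whenever possible. Maximality gives $\sum_{i=1}^{3}(\mathbb{1}[u\in M_i]+\mathbb{1}[v\in M_i])\ge 3$ for every edge, the baseline that the distribution scheme must strengthen to $\ge 2$ in aggregated dual. Because the incremental model's add-back rules out the bad-edge pathology from Lemma~\ref{tree1}, the proof here is expected to be substantially shorter, with no tunable $\epsilon$.

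Concretely, I would root the tree at an arbitrary vertex, orient each edge toward its parent, and define a distribution rule that biases charge toward non-leaf endpoints: matching edges incident on tree-leaves send their weight to the non-leaf side, while internal edges split (either evenly or with a carefully chosen ratio) to make every adjacent edge's dual constraint meet the bound. The proof then case-analyses on the number $k_e$ of matchings containing $e$: the cases $k_e\ge 2$ are immediate from $e$'s own contribution reaching $2$, and for $k_e\le 1$ the residual is supplied by the matching neighbours of $u$ and $v$ that, by maximality, must exist, directing their parent-biased (or leaf-adjusted) shares toward $u$ or $v$.

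The main obstacle is designing a single distribution rule that simultaneously satisfies every edge's constraint. A uniform rule fails, because different local configurations push in opposite directions: the internal edge in the path $v_1v_2v_3v_4$ (with $k_e=2$ while the two outer leaf edges have $k_e=1$) forces an even $1/2$-$1/2$ split, while star-centred configurations want all the weight at the hub. Reconciling these requires a location-dependent rule, validated by a short tour through the possible local matching configurations at $e$, analogous to but noticeably simpler than the eight-case analysis of Lemma~\ref{tree1}.
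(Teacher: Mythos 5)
Your high-level framework matches the paper's: the same dual LP, the observation that all three $M_i$ stay maximal, rooting the tree and biasing charge toward one endpoint, and a case split on $k_e=$ number of matchings containing $e$. But the proposal stops short of the actual content. You never pin down the charging rule; you gesture at one (``matching edges incident on tree-leaves send their weight to the non-leaf side, while internal edges split \dots with a carefully chosen ratio''), and then your last paragraph concedes that designing a rule that works simultaneously for paths and stars is the unresolved obstacle. That obstacle is precisely the lemma.

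Moreover, the rule you sketch is not the one that works, and the difficulty you anticipate does not arise. The paper's rule depends only on $k_e$ and the parent/child orientation, not on whether an endpoint is a tree-leaf: if $k_e=1$, all charge (one unit) goes to the parent; if $k_e=2$, split $1$--$1$; if $k_e=3$, all (three units) to the parent; $k_e=0$ contributes nothing. With this rule the case analysis is genuinely short. For $k_e\ge2$ the edge itself already gives $y_u+y_v\ge2$. For $k_e=1$, maximality of the other two matchings forces an edge of $M_2$ and an edge of $M_3$ incident on $\{u,v\}$ other than $e$; each such edge, being either a child edge of $u$ or $v$ (hence charging its parent in $\{u,v\}$) or a $2$-matching edge (hence charging both endpoints), delivers at least $1$ to $\{u,v\}$, and together with $e$'s own unit on $v$ this gives $y_u+y_v\ge2$. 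For $k_e=0$, maximality forces edges of at least two distinct matchings on $\{u,v\}$ coming from children, again giving $y_u+y_v\ge2$. To complete your argument you need to state this (or an equivalent) rule explicitly and run the three cases; the ``location-dependent'' complications you worry about in the final paragraph are not needed.
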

\begin{proof}
Root the tree at an arbitrary vertex. For any edge $e\equiv (u,v)$, let $v$ be the parent vertex, and $u$ be the child vertex. The dual variable assignment is done at the end of input/on query, as follows.
\begin{itemize}
 \item If $e$ does not belong to any matching, then it does not contribute to the value of dual variables.
 \item If $e$ belongs to a single matching, then its entire primal charge is assigned to $v$ as $y_v=1$.
 \item If $e$ belongs to two matchings, then its entire primal charge is assigned equally amongst $u$ and $v$, as $y_u=1$ and $y_v=1$.
 \item If $e$ belongs to three matchings, then its entire primal charge is assigned to $v$ as $y_v=3$.
 \end{itemize}
 The analysis breaks up into three cases.\\  
   \textbf{Case 1.} Suppose $e$ does not belong to any matching. There must be a total of at least $2$ edges incident amongst $u$ and $v$ besides $e$, each belonging to a distinct matchings, from their respective children. Therefore, $y_u+y_v\geq 2$.\\
     \textbf{Case 2.} Suppose $e$ belongs to a single matching. Then either there is an edge $e'$ incident on $u$ or $v$ which belongs to a single matching, from their respective children, or there is an edge $e''$ incident on $u$ or $v$ which belongs to two matchings. In either case, $y_u+y_v\geq 2$.\\
   \textbf{Case 3.} Suppose $e$ belongs to two or three matchings, then $y_u+y_v\geq 2$ trivially.
\end{proof}
Lemma~\ref{tree2} immediately implies Theorem~\ref{ub_MCM2} using Claim~\ref{ar_lemma}.
\begin{theorem}\label{ub_MCM2}
 Algorithm~\ref{alg_rand2} is a $\frac{3}{2}$-approximate (in expectation) randomized algorithm for MCM on trees, with a worst case update time of $O(1)$.
\end{theorem}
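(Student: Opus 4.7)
The plan is to prove Theorem~\ref{ub_MCM2} by combining Lemma~\ref{tree2} with Claim~\ref{ar_lemma} for the approximation guarantee, and by a direct case analysis of the Augment and Switching phases for the update-time bound. Since the paper indicates that the theorem follows immediately from these ingredients, the task is mainly to spell out the primal-dual translation and to verify that each per-insertion operation is bounded by a constant independent of $n$.

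For the approximation ratio, let $P := \mathbb{E}[|M_l|] = \tfrac{1}{3}(|M_1|+|M_2|+|M_3|)$ denote the expected size of the matching that Algorithm~\ref{alg_rand2} outputs. The dual charge distribution constructed in the proof of Lemma~\ref{tree2} assigns to each edge belonging to $k$ matchings a total primal contribution of $k$, so $\sum_v y_v = |M_1|+|M_2|+|M_3| = 3P$. Since Lemma~\ref{tree2} establishes $y_u+y_v \geq 2$ for every edge $(u,v)$ of the tree, the scaled assignment $y/2$ is a feasible dual solution of value $\tfrac{3}{2}P$, and weak duality gives $|\mathrm{OPT}| \leq \tfrac{3}{2}P$. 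Invoking Claim~\ref{ar_lemma} with $\alpha = 3/2$ then yields the claimed $3/2$-approximation in expectation.

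For the $O(1)$ worst-case update time, I would assume an implementation that keeps, for each vertex $v$ and each matching index $i\in\{1,2,3\}$, a pointer to the unique edge of $M_i$ incident to $v$ (or null). With this data structure, the Augment phase performs $O(1)$ work per matching, since it only inspects the two endpoints of the incoming edge and possibly writes a pointer. In the Switching phase, $|X(M_i,e)|\leq 2$ always, so the sum governing the swap condition need not be computed from scratch; its \emph{change} under the swap depends only on whether each of the at most two edges of $X(M_i,e)$ lies in $M_j$ and on whether $e\in M_j$, a constant number of pointer lookups. If the swap fires, at most two edges are discarded from $M_i$; for each discarded edge, the re-insertion step examines the at most two pointers at its far endpoint and performs a local feasibility check before inserting into $M_i$. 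Crucially, these re-insertions do not recursively trigger further switching, so the total per-insertion work across both phases is a fixed constant.

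The main point to verify, rather than a deep obstacle, is the correctness of the local re-insertion step and its consistency with the invariants used in the analysis. I would explicitly check that after executing $M_i\gets M_i\setminus X(M_i,e)\cup\{e\}$, the far endpoint of each discarded edge is indeed uncovered in $M_i$, so any candidate edge brought in from a different matching only needs a feasibility test at its opposite endpoint; and that the resulting $M_i$ remains a valid matching, preserving the maximality property used implicitly in the proof of Lemma~\ref{tree2}. Given these local checks, the theorem follows from the two ingredients above.
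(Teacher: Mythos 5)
Your proposal is correct and follows essentially the same route as the paper, which simply asserts that Theorem~\ref{ub_MCM2} follows immediately from Lemma~\ref{tree2} via Claim~\ref{ar_lemma}. You have usefully made explicit the normalization (the total dual charge equals $|M_1|+|M_2|+|M_3|=3P$, and since $y_u+y_v\geq 2$ on every edge, $y/2$ is a feasible dual of value $\tfrac{3}{2}P$) and supplied the constant-time implementation argument for the switching and re-insertion steps, both of which the paper leaves implicit.
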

We also analyze Algorithm~\ref{alg_rand2} for general graphs with maximum degree $3$, and prove the following Theorem.
\begin{theorem}\label{thm_gen3}
 Algorithm~\ref{alg_rand2} is a $1.8$-approximate (in expectation) randomized algorithm for MCM on general graphs with maximum degree $3$, with a worst case update time of $O(1)$.
\end{theorem}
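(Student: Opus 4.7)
\textbf{Proof plan for Theorem~\ref{thm_gen3}.}

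The $O(1)$ worst-case update time is immediate from the degree bound: inserting an edge $e\equiv (u,v)$ triggers an Augment phase that inspects at most three edges per matching incident to $\{u,v\}$, a Switching phase for $i=2,3$ that evaluates its shared-edge quantity on the same $O(1)$-size neighbourhood, and a re-insertion step that considers at most two candidate edges per discarded edge. Hence all three matchings and their pairwise intersections can be maintained in constant time per insertion.

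For the approximation guarantee, the plan is to mimic the primal-dual analysis of Lemma~\ref{tree2}. Each matching edge $f\in M_i$ will distribute its unit of primal weight between its two endpoints, so that $\sum_v y_v=\sum_i|M_i|=3\,\mathbb{E}[|M_l|]$. By Claim~\ref{ar_lemma}, it then suffices to show $y_u+y_v\geq 5/3$ for every graph edge $e\equiv(u,v)$, which yields the approximation ratio $3/(5/3)=9/5=1.8$. I would case-split on $k=|\{i:e\in M_i\}|$: for $k\geq 1$ the edge $e$ itself contributes $1$ to $y_u+y_v$ under any balanced split, and the maximality of the other $3-k$ matchings contributes at least $1/2$ each, already giving $y_u+y_v\geq 2$. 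The delicate case is $k=0$.

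For $k=0$, maximality of each $M_i$ ensures at least three matching-incidences on $\{u,v\}$, giving $y_u+y_v\geq 3/2$ under a plain uniform split --- short of $5/3$ by $1/6$. I would further split into (a) some $M_i$ covers both $u$ and $v$ (necessarily via two distinct edges, since $e\notin M_i$), giving four incidences and $y_u+y_v\geq 2$ outright; and (b) every $M_i$ covers exactly one of $u,v$. Sub-case (b) is the main obstacle. Exploiting the max-degree-$3$ constraint (each of $u,v$ has at most two other neighbours) together with the Switching-phase invariant (no switch for $i=2,3$ strictly decreased the shared-edge quantity when $e$ was inserted), I expect to argue that the three adjacent matching edges at $\{u,v\}$ must be pairwise distinct and unique to their respective matchings. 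A biased redistribution that transfers $1/6$ of charge from each such unique adjacent edge back toward its endpoint in $\{u,v\}$ --- using that the donor endpoint is already over-covered by the other two matchings --- will then raise $y_u+y_v$ to $5/3$ without violating any other dual constraint. Verifying the consistency of this local redistribution on the $O(1)$-size neighbourhood of $e$ is the main technical step; combined with Claim~\ref{ar_lemma}, it yields Theorem~\ref{thm_gen3}, and tightness of the $9/5$ bound can be exhibited on a small max-degree-$3$ example.
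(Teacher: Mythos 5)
Your plan is in the same spirit as the paper's proof --- a primal-dual argument with a $\pm\epsilon$-biased distribution of primal charge, with the non-matching edge $e$ as the delicate case --- so you have the right shape. But there are two concrete problems, and they interact.

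First, the key structural claim you lean on in sub-case (b) is false. ``Every $M_i$ covers exactly one of $u,v$'' does \emph{not} imply the three matching-incidences come from pairwise distinct, single-matching edges. A valid (b)-configuration is $f\in M_1\cap M_2$ incident on $u$ together with $g\in M_3$ incident on $v$: each matching covers exactly one endpoint, yet there are only two adjacent matching edges and one of them lies in two matchings. Your proposed lemma would not survive this configuration, and nothing in the Switching phase rules it out. (In fact the Switching-phase invariant plays no role in the paper's analysis at all; the only invariant used is that all three matchings stay maximal, which the ``re-add discarded neighbours'' step preserves.)

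Second, your redistribution is specified only locally, around each uncovered edge $e$, whereas it must be a globally consistent assignment of each matching edge's unit charge. The paper avoids the consistency headache by making the bias a \emph{per-edge} rule, independent of $e$: an edge $f$ that lies in exactly one matching gives $1/2+\epsilon$ to whichever of its endpoints is covered by only that one matching (if such an endpoint exists) and $1/2-\epsilon$ to the other; otherwise it splits $1/2$--$1/2$; edges in two or three matchings always split evenly. With $\epsilon=1/6$ this yields exactly $y_u+y_v\ge 5/3$ in every case. The place the degree-$3$ bound is actually needed --- and this is the step your sketch leaves entirely open --- is in showing that when $e\notin\cup_i M_i$ and $v$ is covered by two or more matchings, $y_v\ge 1$. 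If some single-matching edge $g_1=(v,z_1)$ on $v$ were contributing only $1/2-\epsilon$ to $v$, then $z_1$ would be covered by a single matching; maximality of the remaining matching then forces it to cover $g_1$ via $v$ or $z_1$, and the degree bound at $v$ (already saturated by $e,g_1,g_2$) makes both impossible. So each covering edge in fact contributes a full $1/2$ (or more, if shared), which is what closes the gap to $5/3$. Your plan, as written, does not contain this argument and would not go through without it.
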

We use the following Lemma to prove Theorem~\ref{thm_gen3}.

\begin{lemma}\label{gen3}
There exists an assignment of the primal charge amongst the dual variables such that the dual constraint for each edge $e\equiv (u,v)$ is satisfied at least $\frac{5}{9}$th in expectation.
 \end{lemma}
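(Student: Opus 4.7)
The plan is to extend the primal-dual analysis of Lemma~\ref{tree2} from trees to max-degree-$3$ graphs. For each edge $e = (u,v)$ let $k_e \in \{0,1,2,3\}$ be the number of matchings containing it, and distribute the primal charge $k_e$ between $y_u$ and $y_v$. Averaging over the uniform choice $l \in \{1,2,3\}$ of output matching, the desired bound $\mathbb{E}[y_u + y_v] \geq 5/9$ is equivalent to showing the deterministic inequality $y_u + y_v \geq 5/3$ for every edge $e$, which is what the proof will establish.

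Two structural facts will drive the analysis. First, by the maximality property of Algorithm~\ref{alg_rand2} already invoked for Theorem~\ref{ub_MCM2}, each of the three matchings covers at least one endpoint of $e$; counting at the two endpoints gives $\sum_{e' \ni u,\, e' \neq e} k_{e'} + \sum_{e' \ni v,\, e' \neq e} k_{e'} \geq 3 - k_e$, i.e.\ the total matching contribution around $e$ is at least $3$. Second, because the maximum degree is $3$, each of $u,v$ has at most two other incident edges, so the local configurations around $e$ form a small enumerable list.

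The main step will be a case analysis on $k_e$. Since on a general graph we cannot orient charge towards a root as in the tree proof, the distribution rule will have to be more symmetric: each edge keeps part of its primal charge on both sides, with the split governed by a transfer parameter $\epsilon$ in the spirit of Lemma~\ref{tree1}. For $k_e \geq 2$, even an unbalanced $\epsilon$-split of $e$'s own charge already yields $y_u + y_v \geq 5/3$ by itself, and we only need to reserve enough to help neighbors. For $k_e \leq 1$, we combine $e$'s contribution (if any) with the neighboring contributions guaranteed by the coverage bound and with the reserved $\epsilon$-transfers that those neighbors sent across; checking that the sum is at least $5/3$ in each of the finitely many local configurations, and optimising $\epsilon$ so that the worst case is tight, gives the claimed bound.

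The main obstacle will be configurations in which many neighboring edges of $e$ have small $k_e$, so that a large fraction of the needed $5/3$ must come from how those neighbors split their own charge. A representative tight instance is a degree-$3$ vertex $u$ whose three incident edges each belong to a distinct single matching and terminate at low-degree vertices: here neither endpoint of an incident edge has much matching mass of its own, so the split of each incident edge must be tuned (essentially $1/3$--$2/3$ between the central vertex and the leaf) in order to simultaneously satisfy the three dual constraints meeting at $u$. This kind of local obstruction, which cannot be ruled out on a cycle-containing degree-$3$ graph, is what forces the bound up from the tree value $3/2$ to $9/5$; identifying the extremal configurations of this form and verifying that no other configuration is worse is where the bulk of the case work lies.
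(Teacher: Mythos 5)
Your outline matches the paper's strategy closely: both reduce the expected bound $5/9$ to the deterministic inequality $y_u + y_v \geq 5/3$ summed over the three matchings, both use maximality to lower-bound the matching weight incident on $\{u,v\}$, both split each edge's unit primal charge between its endpoints with an $\epsilon$-tuned asymmetric rule, both run a case analysis on $k_e$, and both arrive at $\epsilon = 1/6$ (your $1/3$--$2/3$ split). So the high-level approach is the same.

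The gap is that the part of the argument carrying all the weight is left unspecified. The paper's charge rule is concrete: an edge in a single matching assigns $1/2 + \epsilon$ to an endpoint that is covered by only a single matching and $1/2 - \epsilon$ to the other, and $1/2$--$1/2$ when both endpoints are multiply covered; edges in two or three matchings split their charge evenly. This local decision --- keyed to how many matchings cover each endpoint --- is exactly what the case analysis leans on (for instance, in Case 1 it is what lets the paper claim $y_v \geq 1$ when $v$ is multiply covered while $y_u = 1/2 + \epsilon$ when $u$ is singly covered). Your proposal only says the split ``must be tuned'' and defers ``the bulk of the case work''; as written it does not determine which endpoint receives the $+\epsilon$, does not state the even-split fallback, and does not verify the three $k_e$ cases, which is where the proof actually lives. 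One further framing point: the tight configuration you exhibit is itself a tree (a star $K_{1,3}$), so what lets trees beat $5/3$ is the ability to root and push charge in a single direction as in Lemma~\ref{tree2}, not acyclicity of the graph as such.
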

\begin{proof}
The dual variable assignment is done at the end of input/or query, as follows.
\begin{itemize}
 \item If $e$ does not belong to any matching, then it does not contribute to the value of dual variables.
 \item If $e$ belongs to a single matching, then there are two sub cases.
 \begin{enumerate}
  \item W.l.o.g., if $u$ is covered by a single matching, then  primal charge $x_e=1$ is divided as $y_u=1/2+\epsilon$ and $y_v=y_v+1/2-\epsilon$.
  \item If both $u$ and $v$ are covered by at least two matchings, then  primal charge $x_e=1$ is divided as $y_u=y_u+1/2$ and $y_v=y_v+1/2$.
 \end{enumerate}
 \item If $e$ belongs to two or three matchings, then its entire primal charge is divided equally amongst $u$ and $v$.
 \end{itemize}
 The analysis breaks up into three cases.\\
   \textbf{Case 1.} Suppose $e$ does not belong to any matching. Then $u$ and $v$ must be covered by a total of at least $3$ matchings (counting multiplicities). W.l.o.g., if $u$ is covered by a single matching, then $v$ has to be covered by at least two matchings. Hence, $y_u=1/2+\epsilon$, and $y_v\geq 1$. Else, both $u$ and $v$ are covered by at least two matchings, then $y_u\geq 1$ and $y_v\geq 1$. Therefore, $y_u+y_v\geq 3/2+\epsilon$.\\
     \textbf{Case 2.} Suppose $e$ belongs to a single matching. Then, $y_u+y_v\geq 1+1/2-\epsilon +1/2-\epsilon =2-2\epsilon \geq 3/2+\epsilon$.\\
   \textbf{Case 3.} Suppose $e$ belongs to two or three matchings, then $y_u+y_v\geq 3/2+\epsilon$ trivially.

   The proof of Lemma is complete with $\epsilon=1/6$.   
\end{proof}
Lemma~\ref{gen3} immediately implies Theorem~\ref{thm_gen3} using Claim~\ref{ar_lemma}.

\subsection{A Deterministic Algorithm}
Note that Algorithm~\ref{alg_rand2} only works against an oblivious adversary. In this section, we derandomize Algorithm~\ref{alg_rand2} to give a $(3/2+\epsilon)$-approximation deterministic algorithm, for MCM on trees, with an amortized update time of $O(1/\epsilon)$, for any $\epsilon \leq 1/2$.

   \begin{algorithm}[H]
   \caption{Deterministic Algorithm for MCM}\label{alg_det1}
   \begin{enumerate}
    \item Let $\epsilon \in (0,1/2]$ be some input parameter, and $c=1$.
    \item The algorithm maintains four matchings: $M_1,M_2,M_3$ and a support matching $M_4$. On query, output matching $M_c$.
    \item When an edge $e$ is inserted, the processing happens in four phases.
   \begin{enumerate}
	   \item {\bf The Augment phase.} The new edge $e$
		   is added to each $M_i$ in which there are no 
		   edges  adjacent to $e$.
	   \item {\bf The Switching phase.}
		   For $i=2,3$, in order, $M_i\gets M_i\setminus X(M_i,e)\cup \{e\}$, provided $|X(M_i,e)|=1$ and it decreases the quantity 
 		   $\sum_{j\in[3],i\neq j}|M_i\cap M_j|$.\\
		   For the edge $e'$ that is discarded from $M_i$, add edges on the other end point of $e'$ in $M_j$ ($\forall j\in [4],j\neq i$) to $M_i$ if possible.
	    \item {\bf The Support phase.} If the edge $e$ was not added to any $M_i$, $\forall i\in[3]$, in the Augment or Switching phase, then add it to the support matching $M_4$ if there are no edges adjacent to it in $M_4$.
	    \item {\bf The ChangeCurr phase.} If $|M_c|< \left(|M_i|+|M_j|\right)/(2(1+\epsilon))$ such that $i,j,c\in[3]$, and are all distinct, then set $c = k$ if $M_k$ is the matching of maximum size among $M_1,M_2,M_3$.
   \end{enumerate}
   \end{enumerate}
 \end{algorithm}
Note that there are two more phases in this algorithm than Algorithm~\ref{alg_rand2}, and there is also a minor modification in the description of switching phase. These changes are done to ensure that the size of any matching maintained by the algorithm never decreases(, which helps with the analysis as pointed out later). An edge can be added to $M_i$ in the Switching phase only if it has one conflicting edge in $M_i$. But this can result in $M_2$ and $M_3$ not being maximal matchings(, which is again required in the analysis as pointed out later). The only way this can happen is if some edge $e$ is not added to any matching in the Augment phase, and later on, after the Switching phase, its end points are not covered by $M_2$ and $M_3$. We add such an edge to the support matching $M_4$, and this edge is later added to $M_2$ and $M_3$ in the Switching phase, thereby ensuring their maximality. With these modifications, the approximation ratio claimed in Theorem~\ref{ub_MCM2} still holds on average size of matchings $M_1,M_2,M_3$ stored by this algorithm.

We prove the following theorem for Algorithm~\ref{alg_det1}.
 \begin{theorem}\label{ub_MCM3}
 Algorithm~\ref{alg_det1} is $\left(\frac{3}{2}+\epsilon\right)$-approximate for MCM on trees, with an amortized update time of $O(1/\epsilon)$.
\end{theorem}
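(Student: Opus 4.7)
The theorem has two parts---approximation ratio and amortized update time---and both rely on two invariants maintained by Algorithm~\ref{alg_det1}: (A) the structural bound $|M_1|+|M_2|+|M_3|\geq 2\cdot\mathrm{OPT}$ inherited from Lemma~\ref{tree2}, and (B) the bound $|M_c|\geq (|M_i|+|M_j|)/(2(1+\epsilon))$ for $\{i,j,c\}=\{1,2,3\}$, enforced by the ChangeCurr phase.

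\textbf{Approximation.} First I would verify that the proof of Lemma~\ref{tree2} applies verbatim to Algorithm~\ref{alg_det1}. The only property used in that proof is that $M_1,M_2,M_3$ are maximal matchings at every point in time. The modifications in Algorithm~\ref{alg_det1}---restricting Switching to the case $|X(M_i,e)|=1$ so that the sizes $|M_1|,|M_2|,|M_3|$ never decrease, introducing the support matching $M_4$ so that edges missed by the Augment phase are still available for later use, and re-covering the endpoint vacated by a discarded edge $e'$ by pulling an edge from another matching---are precisely designed to preserve maximality after every insertion, yielding (A). Next I would note that (B) is enforced by construction: either the ChangeCurr check passes, in which case (B) holds by definition, or it fires and $c$ is reset to the index of the largest matching, in which case $|M_c|\geq\max(|M_i|,|M_j|)\geq(|M_i|+|M_j|)/2\geq(|M_i|+|M_j|)/(2(1+\epsilon))$. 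Combining (A) and (B), $(3+2\epsilon)|M_c|=|M_c|+2(1+\epsilon)|M_c|\geq|M_1|+|M_2|+|M_3|\geq 2\cdot\mathrm{OPT}$, i.e.\ $|M_c|\geq\mathrm{OPT}/(3/2+\epsilon)$, as required.

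\textbf{Amortized time.} Every phase---Augment, Switching (one swap plus a constant number of candidate re-additions on the vacated endpoint), Support, and the ChangeCurr condition check---is $O(1)$ per insertion. The sole super-constant cost is incurred when $c$ is actually changed, which I would charge $O(|M_c^{\mathrm{new}}|)$ (for instance, to refresh the explicit output representation of the current matching). To amortize this, observe that immediately after the previous change, $c$ was the argmax, so $|M_i|+|M_j|\leq 2|M_c|$; for ChangeCurr to fire again requires $|M_i|+|M_j|>2(1+\epsilon)|M_c|$, a growth of at least $2\epsilon|M_c|$. Since a single insertion increases $|M_i|+|M_j|$ by at most $2$, at least $\epsilon|M_c|$ insertions must intervene between consecutive change events, and the $O(|M_c|)$ change cost amortizes to $O(1/\epsilon)$ per update.

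\textbf{Main obstacle.} The delicate ingredient is (A): verifying that each $M_i$ remains maximal after a Switching step which discards $e'$ and attempts to re-cover the vacated endpoint $w$ by pulling an edge from $M_1\cup M_2\cup M_3\cup M_4 \setminus M_i$. One must argue inductively that either such a pull succeeds or no edge incident on $w$ is left uncovered in $M_i$, a case analysis keyed to the interplay between the support matching and the restricted Switching rule. Everything else---the two-line derivation of $|M_c|\geq\mathrm{OPT}/(3/2+\epsilon)$ from (A) and (B), and the amortization argument---is essentially bookkeeping once the maximality invariant is established.
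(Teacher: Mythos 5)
Your proposal matches the paper's proof in structure and in all essential ideas: the approximation bound follows from combining the Lemma~\ref{tree2} invariant $|M_1|+|M_2|+|M_3|\geq 2\cdot\mathrm{OPT}$ (which requires maximality of $M_1,M_2,M_3$) with the ChangeCurr invariant $|M_c|\geq(|M_i|+|M_j|)/(2(1+\epsilon))$; the $O(1/\epsilon)$ amortization comes from the observation that, since matching sizes never decrease under the modified Switching rule, the pair $|M_i|+|M_j|$ must grow by $\Omega(\epsilon\,|M_c|)$ between consecutive firings of ChangeCurr. You also correctly flag the delicate point--that the added phases (restricted Switching, re-covering the vacated endpoint, the support matching $M_4$) are there precisely to preserve maximality and monotonicity--which the paper itself asserts informally in the discussion preceding the theorem rather than proving in detail.

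One small mismatch worth closing: in your amortization step, the change cost is $O(|M_c^{\text{new}}|)$ while the number of intervening insertions you derive is $\Omega(\epsilon\,|M_c^{\text{old}}|)$, and you conflate the two as ``$O(|M_c|)$''. The paper resolves this cleanly with a two-case split on whether the optimum more than doubled between updates: if $|M''|\geq 2|M'|$, then at least $|M''|/2$ insertions occurred and the charge amortizes to $O(1)$; otherwise $|M''|<2|M'|$, so $|M_c^{\text{new}}|=O(|M_c^{\text{old}}|)$ and your bound applies directly. (Equivalently, one can note that $|M_c^{\text{new}}|\leq |M_c^{\text{old}}|+k$ where $k$ is the number of intervening insertions, since each insertion can grow a matching by at most one, which gives amortized $O(|M_c^{\text{old}}|/k+1)=O(1/\epsilon)$ in a single case.) Apart from this bookkeeping detail, the two proofs are the same.
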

\begin{proof}
 We first prove the approximation ratio, and then argue about the update time per edge.
 
 Step (3c) in Algorithm~\ref{alg_det1} ensures that at each stage $|M_c|\geq \left(|M_i|+|M_j|\right)/(2(1+\epsilon))$, such that $i,j,c \in [3]$, and are all distinct, and $M_c$ is the current matching which will be output by the algorithm on query. Let $M$ be the optimum matching at any stage. Theorem~\ref{ub_MCM2} implies that
 \begin{align*}
 \frac{|M_c|+|M_i|+|M_j|}{3} &\geq \frac{2}{3} |M|\\
 \Longrightarrow |M_c|+2(1+\epsilon)|M_c| &\geq 2|M|\\
 \Longrightarrow\left(\frac{3}{2}+\epsilon\right)|M_c| &\geq |M|.
 \end{align*}
 Note that the approximation ratio trivially holds after the first edge is inserted as we set $c=1$(, and will hold even if we set $c=2$ or $3$, because first edge is added to all three matchings $M_1,M_2,M_3$).
 
 In the augment or the switching phase, $O(1)$ time is spent per edge. Let $M'', M_c'',M_i'',M_j''$ represent the matchings $M,M_c,M_i,M_j$ immediately after $c$ was updated, and  let $M', M_c',M_i',M_j'$ represent the respective matchings immediately after the previous time $c$ was updated. In the ChangeCurr phase, at most $2|M''|$ time is potentially spent (because size of any matching stored by the algorithm is at most $|M''|$), while changing the current matching output by the algorithm. But we show that this happens very rarely. If $|M''|\geq 2|M'|$, then at least $|M''|/2$ edges have been inserted between two recent updates of $c$. This implies an amortized update time of $O(1)$ per edge.
 
 Now suppose $|M''| <2|M'|$. Immediately after the previous update of $c$, $|M_c'|\geq (|M_i'|+|M_j'|)/2$ (because $M_c$ is the maximum size matching among $M_1,M_2,$ and $M_3$). Just before $c$ is updated in the ChangeCurr phase, $|M_c''| < (|M_i''|+|M_j''|)/(2(1+\epsilon))$. So, the change in the value of $|M_c|$ is at most
 \[
  \frac{|M_i''|+|M_j''|}{2(1+\epsilon)} - \frac{|M_i'|+|M_j'|}{2}.
 \]
But this value is at least zero, as the size of any matching can never decrease by the description of the algorithm. Hence,
\begin{align*}
 \frac{|M_i''|+|M_j''|}{2(1+\epsilon)} - \frac{|M_i'|+|M_j'|}{2} &\geq 0 \\
 \Longrightarrow |M_i''|+|M_j''| - (1+\epsilon)(|M_i'|+|M_j'|) &\geq 0 \\
 \Longrightarrow (|M_i''|-|M_i'|)+(|M_j''|-|M_j'|) &\geq \epsilon(|M_i'|+|M_j'|) \\
 \Longrightarrow (|M_i''|-|M_i'|)+(|M_j''|-|M_j'|) &\geq \epsilon|M'| &\dots M_i',M_j'\text{ are maximal} .
\end{align*}
Thus, along with the fact that $|M''| <2|M'|$, $\Omega(\epsilon|M''|)$ edges have been inserted between two recent updates of value of $c$. This implies an amortized update time of $O(1/\epsilon)$ per edge, and finishes the proof.

\end{proof}

\section{MWM in the Online Preemptive Model}\label{algo}
In this section, we present a randomized algorithm (that uses only $O(1)$ bits of randomness) for MWM in the online preemptive model, and analyze its performance for growing trees. The algorithm is motivated by the deterministic algorithm for MWM due to McGregor~\cite{mcgregor}. McGregor's algorithm is easy to describe -- if the weight of the new edge is more than $(1+\gamma)$ times the weight of the conflicting edges in the current matching, then evict them and add the new edge. The algorithm is $(1+\gamma)(2+1/\gamma)$-competitive, and attains the best competitive ratio of $3+2\sqrt{2} \approx 5.828$ for $\gamma=\frac{1}{\sqrt{2}}$. It achieves this competitive ratio for the following example. Start by presenting an edge of weight $x_0=1$ to the algorithm. This edge will be added to the matching. Assume inductively that after iteration $i$, the algorithm's matching has only the edge of weight $x_i$. In iteration $i+1$, present an edge of weight $y_{i+1}=(1+\gamma)x_i$ on one end point of $x_i$ (we slightly abuse the notation here, and say that $x_i$ is also the name of the edge of weight $x_i$). This edge will not be accepted in the algorithm's matching. Give an edge of weight $x_{i+1}=(1+\gamma)x_i + \epsilon$ on the other end point of $x_i$. This edge will be accepted in the algorithm's matching, and $x_i$ will be evicted. This process terminates for some large $n$, letting $x_{n+1}=(1+\gamma)x_n$. The edge of weight $x_{n+1}$ will not be accepted in the algorithm's matching. The algorithm will hold only the edge of weight $x_n$, whereas the optimum matching would include edges of weight $y_1,\dots,y_{n+1},x_{n+1}$. It can be easily inferred that this gives the required lower bound on the competitive ratio.
  
Notice that the edges presented in the example crucially depended on $\gamma$. To beat this, we maintain two matching, with $\gamma$ values $\gamma_1$ and $\gamma_2$ respectively, and choose one at random. We describe the algorithm next.
  
 \begin{algorithm}[H]
  \caption{Randomized Algorithm for MWM}\label{algo_mwm}
  \label{MWM_rand_algo}
  \begin{enumerate}
   \item Maintain two matchings $M_1$ and $M_2$. Let $j=1$ with probability $p$, and $j=2$ otherwise.
   \item On receipt of an edge $e$:\\
	  For $i=1,2$, if $w(e)>(1+\gamma_i)w(X(M_i,e))$, then $M_i=M_i\setminus X(M_i,e) \cup \{e\}$.
   \item Output $M_j$.
  \end{enumerate}

 \end{algorithm}
 
 Note that we cannot just output the best of two matchings because that could violate the constraints of the online preemptive model.
\subsection{Analysis}
We use the primal-dual technique to analyze the performance of this algorithm. The primal-dual technique used to analyze McGregor's deterministic algorithm for MWM described in~\cite{sumedh} is fairly straightforward. However the management becomes complicated with the introduction of randomness, and we are only able to analyze the algorithm in a very restricted class of graphs, which are growing trees. 

\begin{theorem}\label{MWM_rand}
The expected competitive ratio of Algorithm~\ref{algo_mwm} on growing trees is
\[
\max\left\{\frac{1+\gamma_1}{p},\frac{1+\gamma_2}{1-p},\frac{(1+\gamma_1)(1+\gamma_2)(1+2\gamma_1)}{p\cdot \gamma_1 + (1-p)\gamma_2 + \gamma_1\gamma_2}\right\},
\]
where $p$ is the probability to output $M_1$.
\end{theorem}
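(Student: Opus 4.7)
The plan is to apply Claim~\ref{ar_lemma} via a randomized primal-dual assignment. Since $M_1$ is output with probability $p$ and $M_2$ with probability $1-p$, the expected primal value is $\mathbb{E}[P]=p\cdot w(M_1)+(1-p)\cdot w(M_2)$. I would construct dual variables $y_v$ (depending on the final states of $M_1$ and $M_2$) so that $\mathbb{E}[y_u+y_v]\ge w_e$ for every input edge $e\equiv (u,v)$, while $\sum_v \mathbb{E}[y_v]\le \alpha\cdot \mathbb{E}[P]$, yielding $\mathrm{OPT}\le \alpha\cdot \mathbb{E}[\mathrm{ALG}]$.

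To exploit the growing-tree structure, root the tree at the first-revealed vertex so that each input edge $e$ has a well-defined parent (older) endpoint $v$ and child (newer) endpoint $u$. Since $u$ is brand-new at insertion, the only possible conflict in either $M_i$ lies at $v$, so $|X(M_i,e)|\le 1$; whenever an edge in $M_i$ is evicted at $v$, the evicting weight exceeds $(1+\gamma_i)$ times the evicted weight, so the sequence of $M_i$-edges that have ever been incident on $v$ is strictly $(1+\gamma_i)$-geometric in arrival order. For the dual, I would charge each edge $f\in M_i$ that survives to the end a fraction $s_i$ of its weight to its child endpoint and $1-s_i$ to its parent endpoint, weighted by $p_i$ (with $p_1=p$ and $p_2=1-p$). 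This makes $\sum_v\mathbb{E}[y_v]=\mathbb{E}[P]$, with the split parameters $s_1,s_2$ available to optimize the competitive ratio.

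The main work is verifying $\mathbb{E}[y_u+y_v]\ge w_e$ for every input edge. The case split is by the length of the $M_i$-eviction chain ultimately incurred at $v$: after $k_i$ switches in $M_i$ the surviving $M_i$-edge at $v$ has weight at least $(1+\gamma_i)^{k_i}w_e$, and a corresponding share of that weight flows to $y_v$ through the parent-split (and to $y_u$ via the child-split of whatever edges end up on $u$). The three extremal configurations produce the three terms in the claimed max: a length-one chain in $M_1$ with $M_2$ contributing nothing new yields the bound $(1+\gamma_1)/p$; its mirror image gives $(1+\gamma_2)/(1-p)$; and a length-two chain in $M_1$ combined with a length-one chain in $M_2$ yields the third term, where the $(1+2\gamma_1)$ factor appears as the lower bound $(1+\gamma_1)^2\ge 1+2\gamma_1$ on the doubly-switched $M_1$-weight, and the denominator collects the tight-constraint contributions weighted by $p$, $1-p$, and the cross term $\gamma_1\gamma_2$.

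The hard part will be the third case: when the eviction chains in $M_1$ and $M_2$ diverge and the child endpoint $u$ is itself involved in later switches deeper in the tree. Tracking two simultaneous geometric growths along ratios $1+\gamma_1$ and $1+\gamma_2$, and isolating the exact scenario that forces the $(1+2\gamma_1)$ factor, requires careful bookkeeping; I would also need to check that this extremal configuration is realizable by an adversary on a growing tree, which doubles as the tightness example promised in the introduction. Taking the maximum of the three extremal ratios then yields the claimed $\alpha$.
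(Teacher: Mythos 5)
Your proposal takes a genuinely different route from the paper's. The paper maintains dual variables \emph{online}, updating them as each edge arrives, and proves a per-round bound $\Delta\text{Dual}/\Delta\text{Primal}\leq\alpha$ (Lemma~\ref{lemma_MWM}); it crucially sets $y_v=0$ for the new leaf vertex $v$ whenever $e$ is not accepted in both matchings, relying on the observation (spelled out in the paper's closing ``Note'') that a future edge at an uncovered leaf is guaranteed to be accepted, so its dual constraint can be satisfied without any charge on that leaf. Your proposal is instead a post-hoc charging scheme with free split parameters $s_1,s_2$. That is a reasonable framework for an upper bound, and it could buy a cleaner, more modular proof if it works out.

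However, there are concrete gaps. First, you locate the $(1+2\gamma_1)$ factor in the inequality $(1+\gamma_1)^2\geq 1+2\gamma_1$ applied to a ``length-two chain in $M_1$.'' That is not where it comes from in the paper. In the paper's third case there is exactly one eviction in each matching; the factor arises as $1+2\gamma_1=2(1+\gamma_1)-1$ from the dual increment $\Delta\text{Dual}\leq(1+\gamma_1)(2w(e)-w(e_1))$ evaluated at the extremal point $(1+\gamma_1)w(e_1)=w(e)$, and no approximation $(1+\gamma_1)^2\geq 1+2\gamma_1$ is used anywhere. Since the worst-case configuration you would optimize $s_1,s_2$ against is misidentified, it is not clear that your scheme, once the bookkeeping is carried out, reproduces the claimed ratio rather than a weaker one. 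Second, your normalization is inconsistent: you say the split scheme ``makes $\sum_v\mathbb{E}[y_v]=\mathbb{E}[P]$'' while simultaneously requiring $\sum_v\mathbb{E}[y_v]\leq\alpha\mathbb{E}[P]$; presumably you intend to prove feasibility only up to a factor $1/\alpha$ and then rescale, but as written the plan does not hang together. Third, and most substantively, the feasibility check for \emph{rejected} edges at a leaf with no future children is the delicate point of the whole argument (it is exactly what the paper's $y_v=0$ convention and its closing Note are engineered to handle), and your sketch does not address it at all: for such an edge $e=(v,u)$ with $u$ a dead leaf, all of $w_e$ must be covered by $v$ alone, which heavily constrains $s_1,s_2$ and interacts nontrivially with whether the surviving $M_i$-edge at $v$ is a parent or a child edge of $v$. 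Until that case and the correct extremal configuration for the third term are worked out, the proposal does not establish the theorem.
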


We maintain both primal and dual variables along with the run of the algorithm. Consider a round in which an edge $e\equiv (u,v)$ is revealed, where $v$ is the new vertex. Before $e$ is revealed, let $e_1$ and $e_2$ be the edges incident on $u$ which belong to $M_1$ and $M_2$ respectively. If such an $e_i$ does not exist, then we may assume $w(e_i)=0$. The primal and dual variables are updated as follows.
\begin{itemize}
\item $e$ is rejected by both matchings, we set the primal variable $x_e=0$, and the dual variable $y_v=0$.
 \item $e$ is added to $M_1$ only, then we set the primal variable $x_e=p$, and the dual variable $y_u=\max(y_u,\min((1+\gamma_1) w(e), (1+\gamma_2) w(e_2)))$, and $y_v=0$;. 
 \item $e$ is added to $M_2$ only, then we set the primal variable $x_e=1-p$, and the dual variable $y_u=\max(y_u,\min((1+\gamma_1)w(e_1),(1+\gamma_2)w(e)))$, and $y_v=0$. 
 \item $e$ is added to both the matchings, then we set the primal variable $x_e=1$, and the dual variables $y_u=\max(y_u,(1+\gamma_1)w(e))$ and $y_v=(1+\gamma_1)w(e)$. 
 \item When an edge $e'$ is evicted from $M_1$ (or $M_2$), we decrease its primal variable $x_{e'}$ by $p$ (or $(1-p)$ respectively), and the corresponding dual variables are unchanged.
\end{itemize}
We begin with three simple observations.
\begin{enumerate}
 \item The cost of the primal solution is equal to the expected weight of the matching maintained by the algorithm.
 \item  The dual variables never decrease. Hence, if a dual constraint is feasible once, it remains so.
 \item $y_u \geq \min((1+\gamma_1)w(e_1),(1+\gamma_2)w(e_2))$.
\end{enumerate}
The idea behind the analysis is to prove a bound on the ratio of the dual cost and the primal cost while maintaining dual feasibility. By Observation $2$, to ensure dual feasibility, it is sufficient to ensure feasibility of the dual constraint of the new edge. If the new edge $e$ is not accepted in any $M_i$, then $w(e)\leq \min((1+\gamma_1)w(e_1),(1+\gamma_2)w(e_2))$. Hence, the dual constraint is satisfied by Observation $3$. Else, it can be seen that  the dual constraint is satisfied by the updates performed on the dual variables.

The following lemma implies Theorem~\ref{MWM_rand} using Claim~\ref{ar_lemma}.
\begin{lemma}\label{lemma_MWM}
 $\frac{\Delta \text{Dual}}{\Delta\text{Primal}} \leq \max\left\{\frac{1+\gamma_1}{p},\frac{1+\gamma_2}{1-p},\frac{(1+\gamma_1)(1+\gamma_2)(1+2\gamma_1)}{p\cdot \gamma_1 + (1-p)\gamma_2 + \gamma_1\gamma_2}\right\}$\\ after every round.
\end{lemma}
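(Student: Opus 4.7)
The plan is to perform a four-way case split on which of $M_1,M_2$ accepts the newly revealed edge $e\equiv(u,v)$, exploiting the growing-tree assumption that $v$ is the newly inserted vertex. This is what keeps the analysis tractable: before this round $y_v=0$, and the only matching edges conflicting with $e$ are the at most two edges $e_1\in M_1$, $e_2\in M_2$ incident on $u$ (use the convention $w(e_i)=0$ if absent). In each case I would write out $\Delta\text{Primal}$ as the contribution from setting $x_e$ minus the demotions of $x_{e_1},x_{e_2}$ induced by any evictions, and pair it with the tight upper bound on $\Delta\text{Dual}$ coming from the algorithm's dual update rule together with the Observation~3 lower bound on $y_u^{\text{old}}$.

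The rejection case is immediate ($\Delta\text{Primal}=\Delta\text{Dual}=0$). In the ``$M_1$ only'' case, only $e_1$ is evicted, so $\Delta\text{Primal}=p(w(e)-w(e_1))$ while $\Delta\text{Dual}\leq \min\bigl((1+\gamma_1)w(e),(1+\gamma_2)w(e_2)\bigr)-y_u^{\text{old}}$; taking the adversarial choices $y_u^{\text{old}}=0$, $w(e_1)\to 0$, and $(1+\gamma_1)w(e)\leq(1+\gamma_2)w(e_2)$ extracts the bound $(1+\gamma_1)/p$. The ``$M_2$ only'' case is symmetric and produces $(1+\gamma_2)/(1-p)$.

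The substantive step is the ``accepted by both'' case, where both $e_1$ and $e_2$ are evicted and $\Delta\text{Primal}=w(e)-p\,w(e_1)-(1-p)\,w(e_2)$. The update rule sets $y_v=(1+\gamma_1)w(e)$ and raises $y_u$ to at most $(1+\gamma_1)w(e)$; combining with Observation~3 under the WLOG assumption $(1+\gamma_1)w(e_1)\leq(1+\gamma_2)w(e_2)$ gives $\Delta\text{Dual}\leq(1+\gamma_1)\bigl(2w(e)-w(e_1)\bigr)$. After normalising $w(e)=1$ the feasible region is $0\leq w(e_i)\leq 1/(1+\gamma_i)$ together with that ordering constraint, and a short partial-derivative computation shows the ratio is non-decreasing in $w(e_2)$ and, once $w(e_2)$ is pushed to $1/(1+\gamma_2)$, non-decreasing in $w(e_1)$ for the relevant range of $p$. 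Substituting the simultaneous upper bounds $w(e_1)=1/(1+\gamma_1)$, $w(e_2)=1/(1+\gamma_2)$ then produces exactly the third expression in the maximum.

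The main obstacle I anticipate is the other extremal corner of that bivariate optimisation, namely $w(e_1)=0,\ w(e_2)=1/(1+\gamma_2)$, which yields $2(1+\gamma_1)(1+\gamma_2)/(p+\gamma_2)$ and can locally beat the interior corner when $p<\gamma_2/(1+2\gamma_2)$. The plan is to observe that precisely in that parameter regime a direct comparison gives $2(1+\gamma_1)(1+\gamma_2)/(p+\gamma_2)\leq(1+\gamma_1)/p$, so the first term $(1+\gamma_1)/p$ already covers this corner. Hence the stated maximum of the three expressions dominates the ratio in every case, and the lemma follows by taking the max of the per-case bounds.
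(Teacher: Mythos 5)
Your proposal follows the same four-way case split, the same primal--dual bookkeeping, and the same fractional-linear monotonicity argument (the paper packages it as Lemma~\ref{key_lemma}) that the paper uses, so the core approach is the same. Two remarks worth making explicit. On the positive side, your handling of the regime $p<\gamma_2/(1+2\gamma_2)$ (equivalently $\gamma_2>p/(1-2p)$) actually closes a small gap in the paper: the paper states the condition under which the ratio is increasing in $w(e_1)$ and then simply substitutes the corner $w(e_1)=w(e)/(1+\gamma_1)$, without addressing what happens when the monotonicity fails. Your observation that in that complementary regime the maximizing corner is $w(e_1)=0$, giving $2(1+\gamma_1)(1+\gamma_2)/(p+\gamma_2)$, and that precisely when $p(1+2\gamma_2)\le\gamma_2$ this quantity is bounded by $(1+\gamma_1)/p$, is both correct and necessary for the lemma to hold for general parameters. (The specific values $p=1/3,\gamma_2=1$ in Theorem~\ref{ub_MWM} sit exactly on the boundary, so the published competitive ratio is unaffected, but the lemma as stated does need your argument.)

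On the other hand, the ``WLOG $(1+\gamma_1)w(e_1)\le(1+\gamma_2)w(e_2)$'' reduction in the both-accepted case is not a genuine symmetry. The dual update always writes $y_v=(1+\gamma_1)w(e)$ (with $\gamma_1$, never $\gamma_2$), so swapping the roles of $e_1,e_2$ does not swap the parameters. The paper treats the other ordering as a separate Case~4: there $y_u^{\text{old}}\ge(1+\gamma_2)w(e_2)$, so the numerator is $\Delta\text{Dual}\le 2(1+\gamma_1)w(e)-(1+\gamma_2)w(e_2)$ (not of the same form as your Case~3 numerator), and the optimization over $(w(e_1),w(e_2))$ must be rerun, with a different monotonicity condition, to arrive at the same final expression. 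You would need to carry this through separately rather than dismissing it by symmetry; once you do, the argument matches the paper's.
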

We will use the following simple technical lemma to prove Lemma~\ref{lemma_MWM}.
\begin{lemma}\label{key_lemma}
 $\frac{ax+b}{cx+d}$increases with $x$ iff $ad-bc\geq 0$.
\end{lemma}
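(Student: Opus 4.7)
The plan is to treat this as a one-line calculus exercise, with a brief remark on the domain of applicability that matches how the lemma will be invoked in the proof of Lemma~\ref{lemma_MWM}. First I would fix the function $f(x)=\frac{ax+b}{cx+d}$ and restrict attention to any open interval on which $cx+d\neq 0$; in the intended application the parameters $\gamma_1,\gamma_2,p$ are positive, so the denominator will be strictly positive throughout and this caveat is harmless.

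Next I would simply differentiate using the quotient rule to obtain
\[
f'(x)=\frac{a(cx+d)-c(ax+b)}{(cx+d)^2}=\frac{ad-bc}{(cx+d)^2}.
\]
Since $(cx+d)^2>0$ on the relevant interval, the sign of $f'(x)$ is constant and equal to the sign of $ad-bc$. Thus $f$ is (weakly) increasing on this interval if and only if $ad-bc\geq 0$, which is exactly the claim.

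There is essentially no obstacle; the only thing to be careful about is the phrasing "increases with $x$" across a pole of $f$, where $f$ changes branches. I would either note explicitly that the statement is meant on an interval avoiding $x=-d/c$, or observe that in the application to Lemma~\ref{lemma_MWM} all coefficients will be nonnegative so no pole occurs on the range of interest. If one prefers a derivative-free argument, the identity $f(x)=\frac{a}{c}+\frac{bc-ad}{c(cx+d)}$ (valid for $c\neq 0$) reduces monotonicity of $f$ to monotonicity of $\frac{1}{cx+d}$, which again shows that $f$ increases precisely when $ad-bc\geq 0$; I would mention this as a sanity check but write the quotient-rule argument as the main proof since it handles $c=0$ uniformly.
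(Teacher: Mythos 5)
Your proof is correct. The paper actually omits a proof of this lemma entirely, treating it as a ``simple technical lemma,'' and the quotient-rule computation $f'(x)=\frac{ad-bc}{(cx+d)^2}$ you give is the standard and essentially only natural argument; your side remarks about avoiding the pole at $x=-d/c$ and about weak versus strict monotonicity are appropriate and do not change anything.
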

\begin{proof}[Proof of Lemma~\ref{lemma_MWM}]
There are four cases to be considered.
\begin{enumerate}
 \item If edge $e$ is accepted in $M_1$, but not in $M_2$. Then $(1+\gamma_1)w(e_1)< w(e) \leq (1+\gamma_2)w(e_2)$. By Observation $3$, before $e$ was revealed, $y_u\geq(1+\gamma_1)w(e_1)$. After $e$ is accepted in $M_1$, $\Delta\text{Primal} =p(w(e)-w(e_1))$, and $\Delta\text{Dual} \leq (1+\gamma_1)(w(e)-w(e_1))$. Hence,
 \[
  \frac{\Delta\text{Dual}}{\Delta\text{Primal} }\leq \frac{(1+\gamma_1)}{p}.
 \]
\item If edge $e$ is accepted in $M_2$, but not in $M_1$. Then $(1+\gamma_2)w(e_2)< w(e) \leq (1+\gamma_1)w(e_1)$. By Observation $3$, before $e$ was revealed, $y_u\geq(1+\gamma_2)w(e_2)$. After $e$ is accepted in $M_2$, $\Delta\text{Primal} =(1-p)(w(e)-w(e_2))$, and $\Delta\text{Dual} \leq (1+\gamma_2)(w(e)-w(e_2))$. Hence,
 \[
  \frac{\Delta\text{Dual}}{\Delta\text{Primal} }\leq \frac{(1+\gamma_2)}{1-p}.
 \]

 \item If edge $e$ is accepted in both the matchings, and $(1+\gamma_1)w(e_1) \leq (1+\gamma_2)w(e_2)$ $ < w(e)$. By Observation $3$, before $e$ was revealed, $y_u\geq(1+\gamma_1)w(e_1)$. After $e$ is accepted in both the matchings, $\Delta\text{Dual} \leq (1+\gamma_1)(2w(e)-w(e_1))$. The change in primal cost is
 \begin{align*}
 \Delta\text{Primal} &\geq w(e)-p\cdot w(e_1) - (1-p)\cdot w(e_2) \\
  &\geq w(e) - p\cdot w(e_1) - (1-p)\cdot \frac{w(e)}{1+\gamma_2}\\
  &=\frac{p+\gamma_2}{1+\gamma_2}w(e) -p\cdot w(e_1).\\
  \frac{\Delta\text{Dual}}{\Delta\text{Primal} }&\leq (1+\gamma_1)\frac{2w(e)-w(e_1)}{\frac{p+\gamma_2}{1+\gamma_2}w(e) -p\cdot w(e_1)}.  
 \end{align*}
By Lemma~\ref{key_lemma}, this value increases, for a fixed $w(e)$, with $w(e_1)$ if $\gamma_2\leq \frac{p}{1-2p}$, and its worst case value is achieved when $(1+\gamma_1)w(e_1)=w(e)$. Thus,
\begin{align*}
 \frac{\Delta\text{Dual}}{\Delta\text{Primal} }&\leq (1+\gamma_1)\frac{2(1+\gamma_1)(1+\gamma_2)-(1+\gamma_2)}{(p+\gamma_2)(1+\gamma_1) -p(1+\gamma_2)}\\
 &= (1+\gamma_1)(1+\gamma_2)\frac{1+2\gamma_1}{p\cdot \gamma_1 + (1-p)\gamma_2 + \gamma_1\gamma_2}.
\end{align*}

\item If $e$ is accepted in both the matchings, and $(1+\gamma_2)w(e_2) \leq (1+\gamma_1)w(e_1) < w(e)$. By Observation $3$, before $e$ was revealed, $y_u\geq(1+\gamma_2)w(e_2)$. The following bound can be proved similarly.
\begin{align*}
 \frac{\Delta\text{Dual}}{\Delta\text{Primal} } &\leq (1+\gamma_1)(1+\gamma_2)\frac{1+2\gamma_1}{p\cdot \gamma_1 + (1-p)\gamma_2 + \gamma_1\gamma_2}.
\end{align*}
\end{enumerate}
\end{proof}
The following theorem is an immediate consequence of Theorem~\ref{MWM_rand}.
\begin{theorem}\label{ub_MWM}
Algorithm~\ref{algo_mwm} is a $3$-competitive (in expectation) randomized algorithm for MWM on growing trees, when $p=1/3$, $\gamma_1=0$, and $\gamma_2=1$; and the analysis is tight.
\end{theorem}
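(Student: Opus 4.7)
The plan is to obtain the statement as a direct numerical consequence of Theorem~\ref{MWM_rand}, and then separately argue both that the parameter choice is optimal and that an explicit growing-tree instance matches the bound. First, I would plug $p=1/3$, $\gamma_1=0$, $\gamma_2=1$ into the three expressions appearing inside the $\max$ in Theorem~\ref{MWM_rand} and check that each of them individually evaluates to $3$: $(1+\gamma_1)/p = 1/(1/3)=3$, $(1+\gamma_2)/(1-p)=2/(2/3)=3$, and the last term has numerator $(1)(2)(1)=2$ and denominator $p\gamma_1+(1-p)\gamma_2+\gamma_1\gamma_2 = 0+2/3+0 = 2/3$, again giving $3$. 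I would also verify the side condition invoked in Case~3 of Lemma~\ref{lemma_MWM}, namely $\gamma_2 \leq p/(1-2p)$; for $p=1/3$ this reads $\gamma_2 \leq 1$, which our choice saturates. Claim~\ref{ar_lemma} then delivers the claimed competitive ratio $3$.

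The second half of the theorem (optimality of the parameter choice) I would handle by treating the three-term max as a function of $(p,\gamma_1,\gamma_2)$ and arguing the minimum is achieved when all three terms coincide. Setting each expression equal to a common $r$ gives $\gamma_1 = rp-1$ and $\gamma_2 = r(1-p)-1$ from the first two equations, and substituting into the third and simplifying reduces the condition to $2rp-1 = r-2$, i.e.\ $r = 1/(1-2p)$. The nonnegativity constraints $\gamma_1,\gamma_2 \geq 0$ force $p \geq 1/3$, and the resulting $r = 1/(1-2p)$ is minimized at $p=1/3$ with $r=3$. This both justifies the parameter selection and shows that the primal-dual framework of Theorem~\ref{MWM_rand} cannot prove anything better than $3$ for this algorithm.

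For the explicit tight instance I would exhibit a growing tree on which $\mathbb{E}[\mathrm{ALG}]$ approaches $\mathrm{OPT}/3$. The natural candidates are sequences that saturate the per-round primal-dual bound throughout; e.g.\ a nested/recursive construction in which, at each stage, a star-like block of edges of geometrically growing weights is attached at the leaf of the previous block, with weights tuned so that $M_1$ keeps switching (giving Case~1 updates at ratio $3$) while $M_2$ switches exactly once per block (giving Case~3 updates near the tight regime $(1+\gamma_1)w(e_1)\to w(e)$). I would then compute the final $\mathbb{E}[|M_l|]$ and compare to the weight of the offline optimum matching on the constructed tree.

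The main obstacle is this last step: getting the \emph{competitive ratio} (not merely the primal/dual ratio) to approach $3$. The primal-dual analysis upper-bounds $\mathrm{OPT}$ by the dual value, and simple constructions (path with geometric weights, single star, etc.) only achieve $\mathrm{OPT}/\mathbb{E}[\mathrm{ALG}]$ around $3/2$ or $2$ because $\mathrm{OPT}$ on those instances is strictly smaller than the dual we build. The difficulty is designing the growing-tree structure so that $\mathrm{OPT}$ can simultaneously collect an edge of weight $\approx w(e)$ at both endpoints of each edge that the algorithm keeps switching, forcing $\mathrm{OPT} \approx M_1 + 2M_2$ in aggregate. Once such an instance is found, plugging its parameters into the accounting verifies $\mathrm{OPT}/\mathbb{E}[\mathrm{ALG}] \to 3$, completing the proof of tightness.
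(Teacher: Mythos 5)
Your numerical verification of the upper bound is correct and matches the paper's approach: plugging $p=1/3$, $\gamma_1=0$, $\gamma_2=1$ into the three terms of Theorem~\ref{MWM_rand} gives $3$ in each, and the side condition $\gamma_2\le p/(1-2p)=1$ is saturated. The added optimization argument (balancing the three terms, minimizing $r=1/(1-2p)$ over $p\ge 1/3$) is not in the paper but is a sensible justification for the parameter choice.

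The gap is in the tightness half of the claim, which you explicitly acknowledge you have not completed. The paper does construct the hard instance, and it is considerably simpler than the ``nested/recursive, star-like block'' structure you gesture at: it is precisely the McGregor-style caterpillar described earlier in the section, specialized to this algorithm. Start with $x_0=1$ in both matchings; inductively present a pendant edge $y_{i+1}=x_i$ on one endpoint of $x_i$ (rejected by both $M_1$ and $M_2$ since $\gamma_1=0$ requires strict improvement and $\gamma_2=1$ requires doubling), then present $x_{i+1}=2x_i+\epsilon$ on the other endpoint (accepted by both, evicting $x_i$); terminate with a final pendant of weight $x_n$. Both matchings end with just the edge $x_n\approx 2^n$, so $\mathbb{E}[\mathrm{ALG}]=2^n$, while $\mathrm{OPT}$ collects all pendant $y_j$'s plus the terminal edge, totaling $\approx 1+2+\dots+2^n+2^n\approx 3\cdot 2^n$. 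Your worry that ``simple constructions (path with geometric weights, single star, etc.) only achieve around $3/2$ or $2$'' misses that the pendant $y$-edges are what let $\mathrm{OPT}$ recover everything the algorithm evicts or rejects; a bare path without them would indeed fall short, but the caterpillar gets the full factor $3$. Without that explicit instance your proposal proves only that the analysis gives ratio $3$, not that the algorithm actually attains it, so the ``and the analysis is tight'' clause remains unproved.
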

The input for which Algorithm~\ref{algo_mwm} is $3$-competitive is as follows. Start by presenting an edge of weight $x_0=1$. It will be added to both $M_1$ and $M_2$. Assume inductively, that currently both matching only contain an edge of weight $x_i$. Present an edge of weight $y_{i+1}=x_i$ on one end point of $x_i$. This edge will not be accepted in either of the matchings. Present an edge of weight $x_{i+1}=2\cdot x_i + \epsilon$ on the other end point of $x_i$. This edge will be accepted in  both the  matchings, and $x_i$ will be evicted. For a sufficiently large value $n$, let $x_{n+1}=x_n$. So edge of weight $x_{n+1}$ will not be accepted in either of the matchings. Both the matchings will hold only the edge of weight $x_n$, whereas the optimum matching would include edges of weight $y_1,\dots,y_{n+1},x_{n+1}$. The weight of the matching stored by the algorithm is $2^n$, whereas the weight of the optimum matching is $\approx 3\cdot 2^n$ (we have ignored the $\epsilon$ terms here). This gives the competitive ratio $3$.

\textbf{Note.} In the analysis of Algorithm~\ref{algo_mwm} for growing trees, we crucially use the  following fact in the dual variable assignment. If an edge $e\notin M_i$ for some $i$, then a new edge incident on its leaf vertex will definitely be added to $M_i$, and it suffices to assign a zero charge to the corresponding dual variable. This is not necessarily true for more general classes of graphs, and new ideas are needed to analyze the performance for those classes.

\section*{Acknowledgements}
The first author would like to thank Ashish Chiplunkar for helpful suggestions to improve the competitive ratio of Algorithm~\ref{algo_mwm}, and also to improve the presentation of Section~\ref{algo}.

\bibliographystyle{plainurl}
\bibliography{paper}
\begin{appendix}

 \section{Proof of Lemma~\ref{internal}}\label{pf_inbad}
 We crucially use the following lemma to prove Lemma~\ref{internal}.
 \begin{lemma}\label{M4bad}
  (a) If an edge $e$ belongs only to $M_4$ at the end of input, then bad edges cannot be incident on both its end points.\\
  (b) Also, if an edge $e$ was added to $M_4$ only in the switching phase, then $e$ cannot be a bad edge.
 \end{lemma}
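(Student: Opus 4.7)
Plan:

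I would prove part (b) first, since it provides a useful tool for part (a). For (b), suppose $e$ was added to $M_4$ only in the switching phase of its own arrival (i.e.\ not in the augment phase). Then at the start of the switching phase, $M_4$ already contained a nonempty set $H := X(M_4, e)$ of edges adjacent to $e$, and the switching rule must have triggered, meaning the quantity $\sum_{j \in \{1,2,3\},\, X(M_4,e) \subseteq M_j} |M_4 \cap M_j|$ strictly decreased after replacing $H$ by $\{e\}$. For that decrease to be strictly positive, there must exist some $j \in \{1,2,3\}$ with $H \subseteq M_j$ and $|M_4 \cap M_j| \geq 1$; in particular every endpoint of $e$ that was covered by an edge of $H$ is simultaneously covered by $M_j$. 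Combined with the observation that the endpoints of $e$ are covered by all four matchings immediately after processing $e$, the fact that $M_1$ is greedy and never loses edges, and the persistent presence of $e$ in $M_4$ (which covers $u$ and $v$ in $M_4$ for the rest of the input), a short induction over the subsequent arrivals would show that the endpoints of $e$ remain covered by all four matchings until the end, so $e$ is not bad.

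For part (a), I would argue by contradiction. Assume $e = (u,v)$ belongs only to $M_4$ at the end, and that bad edges $f_u$ incident on $u$ and $f_v$ incident on $v$ exist. Since $e \in M_4$ covers both $u$ and $v$, the ``missing'' matching of each bad edge lies in $\{M_1, M_2, M_3\}$; call them $M_a$ (for $f_u$) and $M_b$ (for $f_v$). I would first rule out the degenerate case $f_u = e$ or $f_v = e$ (which amounts to showing $e$ itself is not bad) by applying (b) if $e$ entered $M_4$ via switching, and by a direct argument using greediness of $M_1$ if $e$ entered via augment. Assuming $e$ is not bad, all four matchings cover $e$; since $M_a$ does not cover $u$ (by badness of $f_u$), $M_a$ must cover $v$ via a distinct edge, and similarly $M_b$ covers $u$. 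The greediness of $M_1$ pins down the $M_1$-edge adjacent to $e$ (since $e \notin M_1$, such an edge existed at $e$'s arrival and is still present), which restricts whether $M_1$ covers $u$ or $v$. A case split on $a, b \in \{1,2,3\}$ and on whether $e$ entered $M_4$ via augment or switching, combined with the constraints derived from the switching condition and from $f_u, f_v$ being bad, yields a contradiction in each case.

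The main obstacle I expect is the combinatorial bookkeeping surrounding the switching rule. Unwrapping $|X(M_i \cap M_j, e)| = |X(M_i, e)|$ as ``all $M_i$-neighbors of $e$ are also in $M_j$'' and then tracking how the matching memberships of the neighborhood of $e$ evolve as later edges trigger further switches is delicate, especially since $M_2, M_3, M_4$ can all shed edges. The hardest step is showing that the simultaneous existence of bad edges at both $u$ and $v$, together with $e \in M_4$ only, forces a configuration that either violates the switching condition that must have put $e$ into $M_4$, or contradicts the post-arrival observation together with the monotonicity of $M_1$.
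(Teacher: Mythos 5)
Your proposal sketches plausible directions but leaves the crux of both parts undischarged, and takes a different decomposition for (a) than the paper. For part (b), you extract from the switching condition that some $j\in\{1,2,3\}$ has $X(M_4,e)\subseteq M_j$ and $|M_4\cap M_j|\geq 1$, which gives coverage by $M_4$ and one other matching at the moment of $e$'s arrival, and then lean on ``a short induction over the subsequent arrivals would show'' that all-four coverage persists. That induction is where the entire content lies, and it cannot be generic: the observation that every edge's endpoints are covered by all four matchings right after its arrival holds universally, yet bad edges do exist, so you have not isolated what is special about an edge entering $M_4$ via the switching phase. The paper's proof of (b) instead argues that being added to $M_4$ but not $M_2$ or $M_3$ in the switching phase forces three distinct neighbor edges $e_1,e_2,e_3$ of $e$ lying in $M_1,M_2,M_3$ respectively --- together with $e\in M_4$, four separate edges cover $e$'s endpoints --- and your sketch does not derive this structural fact. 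You also tacitly rely on ``the persistent presence of $e$ in $M_4$'', which is neither in the hypothesis of (b) nor obviously true, since $e$ can be switched out of $M_4$ by a later edge.

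For part (a), the paper argues directly, casing on how $e$ entered $M_4$ (alone in the augment phase, together with one or more other matchings in the augment phase, and further subcases on the resulting neighborhood) and exhibiting, in each case, neighboring edges that preclude bad edges on both endpoints of $e$. You propose a contradiction argument with a case split on the matchings missing from $f_u$ and $f_v$ and on how $e$ entered $M_4$; this is a genuinely different route and may be viable, but ``yields a contradiction in each case'' stands in for the entire argument and none of the cases are executed. The difficulty you flag at the end --- tracking how the neighborhood's matching memberships evolve as later edges trigger further switches --- is exactly the obstacle, and the proposal does not resolve it for either part.
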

 \begin{proof}
  There are two cases to consider.
  \begin{enumerate}
   \item Suppose $e$ was added to $M_4$ only when it was revealed. Then on one of its end point either there should be two edges incident (other than $e$), such that each of them belongs to a single matching, or there should be one edge which belongs to two matchings. In either case, the edges incident on that end point of $e$ should have neighboring edges which belong to some matching (by description of algorithm). And hence, these edges cannot be bad.
   \item Suppose $e$ was added to $M_4$ as well as some other matching when it was revealed. If $e$ belonged to three matchings when it was revealed, then its neighboring edge will have its end points covered by at least four matching edges, and this number can never go below four. If $e$ belonged to two matchings when it was revealed, then it\\
   -- (a) either has one neighboring edge which belongs to two matchings,\\
   -- (b) or one neighboring edge on each of its end points, each belonging to distinct matching,\\
   -- (c) or two neighboring edges on one of its end points, such that both of them belong to distinct matchings.\\
   In Case (a), this neighboring edge should have a neighboring edge on its other end point which belongs to some matching, and hence it cannot be a bad edge. In Case (b), each of these edge should have at least two neighboring edges of their own on their respective other end point, which belong to certain matching. Hence, both these edges cannot be bad. In Case (c), both these edges should have neighboring edges of their own on their respective other end point, which belong to certain matching. Hence, both these edges cannot be bad.
  \end{enumerate}
For the second part of lemma, if edge $e$ added to $M_4$ in the switching phase, then it means that $e$ will have three neighboring edges $e_1$,$e_2$, and $e_3$, belonging to $M_1$, $M_2$, and $M_3$, respectively. This is because $e$ will be added to $M_4$ in the switching phase only if it is not added to $M_2$ or $M_3$ in the switching phase, which means there are edges which belong only to $M_2$ and $M_3$ respectively.
 \end{proof}
\begin{proof}[Proof of Lemma~\ref{internal}]
There are two cases to consider.
\begin{enumerate}
 \item Suppose if there is a bad leaf edge $e$ which belongs to $M_4$. If $e$ is added to $M_4$ in the switching phase, then $e$ cannot be a bad edge (by part (b) of Lemma~\ref{M4bad}). So, $e$ has to be added to $M_4$ in the augment phase for it to be a bad leaf edge in future.
 \begin{itemize}
  \item If $e$ was added to $M_4$ alone when revealed, then it must have neighbors $e_1$ and $e_2$ such that both of them do not belong to $M_4$. Then, they must have had neighboring edges $e_1'$ and $e_2'$ respectively which belonged to $M_4$ (at some stage). Suppose $e_1''$ (and/or $e_2''$) switches $e_1'$ (and/or $e_2'$ respectively) out of $M_4$, then $e_1''$ (and/or $e_2''$ respectively) cannot be a bad edge (by part (b) of Lemma~\ref{M4bad}). Otherwise, the Lemma holds due to part (a) of Lemma~\ref{M4bad}.
  \item If $e$ was added to two matchings ($M_4$ being one of them) when it was revealed, and finally has only one internal neighboring edge $e_1$, then $e_1$ will have a neighboring edge $e_2$ on its other end point. Either $e_2$ belongs to $M_4$ or its neighboring edge $e_2'$ on other end point belongs to $M_4$. The lemma holds if finally $e_2'$ belongs to $M_4$ (by part (a) of Lemma~\ref{M4bad}) or if finally the neighboring edge $e_2''$ of $e_2'$ belongs to $M_4$ (by part (b) of Lemma~\ref{M4bad}). (The proof for this case will also work for the case when $e$ was revealed first as a single disconnected edge, and then $e_1$ was revealed on one of its end points.)
  \item If $e$ was added to two or three matchings ($M_4$ being one of them) when it was revealed, and finally has two internal neighboring edges $e_1$ and $e_2$, then $e_1$ and $e_2$ must have neighboring edges $e_1'$ and $e_2'$ respectively which belong to $M_4$ (at some stage). Suppose $e_1''$ (and/or $e_2''$) switches $e_1'$ (and/or $e_2'$ respectively) out of $M_4$, then $e_1''$ (and/or $e_2''$ respectively) cannot be a bad edge (by part (b) of Lemma~\ref{M4bad}). Otherwise, the Lemma holds due to part (a) of Lemma~\ref{M4bad}.
 \end{itemize}
\item Let $e_1$ and $e_2$ be two bad internal edges which do not belong to $M_4$. Then, they must have had neighboring edges $e_1'$ and $e_2'$ respectively which belonged to $M_4$ (at some stage). Suppose $e_1''$ (and/or $e_2''$) switches $e_1'$ (and/or $e_2'$ respectively) out of $M_4$, then $e_1''$ (and/or $e_2''$ respectively) cannot be a bad edge (by part (b) of Lemma~\ref{M4bad}). Otherwise, the Lemma holds due to part (a) of Lemma~\ref{M4bad}.
\end{enumerate}

\end{proof}

\end{appendix}

\end{document}